\documentclass[a4paper, 11pt]{article}

\usepackage[margin=1.2in]{geometry}
\usepackage{parskip}
\usepackage[titletoc,title]{appendix}

\usepackage{array}
\usepackage{bm}
\usepackage{caption}
\usepackage{enumerate}
\captionsetup{labelfont=bf}
\usepackage{rotating}
\usepackage[hidelinks, colorlinks = true, linkcolor = blue, urlcolor = blue, anchorcolor = blue, citecolor = blue]{hyperref}
\usepackage{blkarray, bigstrut}
\usepackage{multirow}
\usepackage{booktabs}
\usepackage{siunitx}
\usepackage{mathtools}
\usepackage{amsmath}
\usepackage{amssymb}
\usepackage{amsthm}
\usepackage{complexity}
\usepackage{graphicx}
\usepackage{pgf}
\usepackage{tikz}
\usetikzlibrary{matrix,patterns,positioning,decorations,arrows,shapes,decorations.markings,calc}
\usepackage{verbatim}
\usepackage{subfiles}
\usepackage{pgfplots}
\usepackage{color}
\usepackage{float}
\usepackage[utf8]{inputenc} 
\usepackage{enumerate}
\usepackage{emptypage}
\usepackage{subcaption}
\usepackage{setspace}
\usepackage[export]{adjustbox}
\usepackage{subfiles}
\usepackage{tabularx, environ}
\onehalfspacing

\bibliographystyle{plainurl}

\newtheorem{theorem}{Theorem}
\newtheorem{definition}[theorem]{Definition}

\newtheorem{lemma}[theorem]{Lemma}
\newtheorem{corollary}[theorem]{Corollary}

\newtheorem{observation}[theorem]{Observation}

\usepackage{xcolor}

\usepackage{subfiles}

\usepackage{subcaption}
\usepackage{tikz}   
\usepackage{amsmath}          
\usetikzlibrary{arrows.meta,decorations.markings,patterns}

\newcommand{\tasks}{\mathcal{I}}
\usepackage{algorithm}
\usepackage{algorithmic}

\title{The Fair Periodic Assignment Problem}

\author{Rolf Nelson van Lieshout$^*$, Bartholomeüs Theodorus Cornelis van Rossum$^*$
\vspace{0.1cm}\\
\small{Operations, Planning, Accounting \& Control}\\ 
\small{Eindhoven University of Technology, The Netherlands}
\vspace{0.1cm}\\
\small{$^\star$Corresponding author} \\
\vspace{0.1cm} \\
\small{r.n.v.lieshout@tue.nl, b.t.c.v.rossum@tue.nl}
}

\date{}

\begin{document}
  
\maketitle

\begin{abstract}
\noindent 
We study the periodic assignment problem, in which a set of periodically repeating tasks must be assigned to workers within a repeating schedule. The classical efficiency objective is to minimize the number of workers required to operate the schedule. We propose a $\mathcal{O}(n \log n)$ algorithm to solve this problem. Next, we formalize a notion of fairness among workers, and impose that each worker performs the same work over time. We analyze the resulting trade-off between efficiency and fairness, showing that the price of fairness is at most one extra worker, and that such a fair solution can always be found using the Nearest Neighbor heuristic. We characterize all instances that admit a solution that is both fair and efficient, and use this result to develop a $\mathcal{O}(n \log n)$ exact algorithm for the fair periodic assignment problem. Finally, we show that allowing aperiodic schedules never reduces the price of fairness.
\vspace{5mm}
\newline
{\bf Keywords:} Cyclic scheduling, Fairness, Traveling salesman problem
\end{abstract}

\section{Introduction}

Public transport schedules exhibit a high degree of periodicity across multiple time scales. On short time scales, timetables often repeat every hour or even more frequently; on longer time scales, crew rosters typically follow multi-week cycles. This recurring structure motivates the study of the \emph{Periodic Assignment Problem} (PAP), where resources—such as vehicles, platforms, or crew members—must be assigned to tasks within a repeating schedule \cite{bortoletto_et_al,breugem2022equality,vanlieshout2021integrated}.

When assigning vehicles or platforms, the primary objective is operational efficiency and adherence to capacity constraints. However, when assigning crew members, fairness becomes a central concern: it is desirable that all employees perform the same work over time, rather than being locked into fixed subsets of tasks. While fair periodic rosters are widely used in both rail \cite{breugem2022equality} and bus systems \cite{xie2015cyclic}, the fundamental trade-off between fairness (in workload distribution) and efficiency (in the number of required workers) has not been formally quantified.

In this paper, we make this trade-off explicit. We formalize a natural fairness criterion—requiring that all workers cyclically perform the same set of tasks—and study its impact on scheduling efficiency. We characterize the structure of fair periodic assignments and present efficient algorithms to compute fair schedules that require a minimal number of workers.

\subsection{Problem Description}

Consider a set of $n$ timed tasks $\mathcal{I}$ to be performed periodically by a pool of homogeneous workers. The schedule has a fixed period $T$ (typically one week in rostering contexts), and each task $i \in \mathcal{I}$ is defined by a $T$-periodic open interval $(a_i, b_i)$ with $a_i,b_i\in [0,T)$. This interval may wrap around the end of the period, i.e., it is possible that $a_i>b_i$. The duration of task $i$ is 
\[
c(i) := [b_i - a_i]_T,
\]
where $[\cdot]_T$ denotes the modulo-$T$ operator mapping values to $[0, T)$.

The $r$-th occurrence of task $i$ is performed in the interval $(a_i + rT,\ b_i + rT)$. Workers are immediately available for a new task after completing one (any required rest time can be absorbed into task durations). A worker who completes the $r$-th occurrence of task $i$ can begin the $r$-th occurrence of task $j$ if $b_i \le a_j$, or the $(r+1)$-th occurrence of task $j$ otherwise. In general, the \emph{transition time} between task $i$ and task $j$ is independent of $r$ and defined as
\[
c_{ij} := [a_j - b_i]_T.
\]
 These transitions define a complete directed transition graph $\mathcal{G} = (\mathcal{I}, \mathcal{A})$, where $\mathcal{A} := \mathcal{I} \times \mathcal{I}$ includes all possible task-to-task transitions.

We now formalize the optimization problem of finding an efficient periodic assignment.

\begin{definition}\label{def:pap}
Given a period $T \in \mathbb{N}$, a set of $T$-periodic tasks $\mathcal{I}$, and a transition graph $\mathcal{G} = (\mathcal{I}, \mathcal{A})$, the \emph{Periodic Assignment Problem (PAP)} is to find a subset of arcs $\mathcal{M} \subseteq \mathcal{A}$ such that:
\begin{enumerate}[a)]
    \item The total transition time $\sum_{(i,j) \in \mathcal{M}} c_{ij}$ is minimized,
    \item For every task $i \in \mathcal{I}$, $\mathcal{M}$ includes exactly one arc entering and one arc leaving $i$.

\end{enumerate}
\end{definition}

The second condition requires every task to have exactly one predecessor and one successor. Consequently, any feasible solution $\mathcal{M}$ defines a collection of disjoint cycles, where each cycle represents the sequence of tasks repeatedly executed by a group of workers. A cycle $\mathcal{C}$ covering tasks $\tasks_\mathcal{C}$ with transitions $\mathcal{A}_\mathcal{C}$ requires $w(\mathcal{C})$ workers, where 
$$w(\mathcal{C}) = \frac{1}{T}\left(\sum_{i\in \tasks_\mathcal{C}} c(i) + \sum_{(i,j)\in \mathcal{A}_\mathcal{C}} c_{ij}\right).$$ 
Since the durations of tasks are fixed, minimizing total transition time is equivalent to minimizing the number of workers required to operate the schedule.

\autoref{fig:solutions_efficient} shows a PAP instance, together with three possible representations of an efficient solution that requires the minimum of two workers. In \autoref{subfig:arc_efficient}, the tasks to be performed periodically are labeled $I_1, \dots, I_4$ and displayed as circular arcs. Dotted arcs indicate the transition arcs in the efficient solution, requiring two workers to operate two disjoint schedules of two tasks each. \autoref{subfig:graph_efficient} represents the same solution as a set of disjoint directed cycles in the transition graph, where each node corresponds to a task and the durations of the transitions are given on the arcs. Finally, \autoref{subfig:schedule_efficient} visualizes the solution explicitly as a periodic schedule for two workers.

\newcommand{\arc}[4]{
    \pgfmathsetmacro{\innerR}{#3-0.1}
    \pgfmathsetmacro{\outerR}{#3+0.1}
    \pgfmathsetmacro{\middle}{0.1 * #1 + 0.9 * #2}
    \draw[black] ({#3*cos(#1)}, {#3*sin(#1)}) arc (#1:#2:#3);
    \draw[black] ({\innerR *cos(#1)}, {\innerR * sin(#1)}) -- ({\outerR*cos(#1)}, {\outerR*sin(#1)});
    \draw[black] ({\innerR *cos(#2)}, {\innerR*sin(#2)}) -- ({\outerR*cos(#2)}, {\outerR*sin(#2)});
    \node at ({(\outerR + 0.11) *cos(\middle)}, {(\outerR + 0.11) *sin(\middle)}) {#4};
}

\newcommand{\transitionarc}[4]{
    \pgfmathsetmacro{\steps}{30}
    \pgfmathsetmacro{\deltaAngle}{(#2 - #1)/\steps}
    \pgfmathsetmacro{\deltaRadius}{(#4 - #3)/\steps}
    \foreach \i in {0,...,\steps} {
        \pgfmathsetmacro{\angleA}{#1 + \i * \deltaAngle}
        \pgfmathsetmacro{\angleB}{#1 + (\i + 1) * \deltaAngle}
        \pgfmathsetmacro{\radiusA}{#3 + \i * \deltaRadius}
        \pgfmathsetmacro{\radiusB}{#3 + (\i + 1) * \deltaRadius}
        \draw[loosely dotted] 
            ({\radiusA*cos(\angleA)}, {\radiusA*sin(\angleA)}) -- 
            ({\radiusB*cos(\angleB)}, {\radiusB*sin(\angleB)});
    }

    \pgfmathsetmacro{\startAngle}{#1}
    \pgfmathsetmacro{\startRadius}{#3}

    \pgfmathsetmacro{\smallStepAngle}{\startAngle + 5}  
    \pgfmathsetmacro{\smallStepRadius}{\startRadius + 0.02} 

    \coordinate (arrowstart) at ({\smallStepRadius*cos(\smallStepAngle)}, {\smallStepRadius*sin(\smallStepAngle)});

    \coordinate (arrowtip) at ({\startRadius*cos(\startAngle)}, {\startRadius*sin(\startAngle)});

    \draw[dotted, ->] (arrowstart) -- (arrowtip);
}

\begin{figure}[htbp!]
\begin{subfigure}[b]{0.3\textwidth}
\centering
\begin{tikzpicture}
    \draw[black] (0, 0) circle(1);

    \arc{15}{165}{1.5}{$I_1$};
    \arc{195}{345}{1.5}{$I_2$};
    \arc{105}{255}{2}{$I_3$};
    \arc{285}{435}{2}{$I_4$};

    \draw[dotted, <-] ({1.5*cos(165)}, {1.5*sin(165)}) arc (165:195:1.5);
    \draw[dotted, <-] ({1.5*cos(345)}, {1.5*sin(345)}) arc (345:375:1.5);
    \draw[dotted, <-] ({2*cos(255)}, {2*sin(255)}) arc (255:285:2);
    \draw[dotted, <-] ({2*cos(435)}, {2*sin(435)}) arc (75:105:2);

    \foreach \angle/\label in {
    90/$0$,
    0/{$\frac{T}{4}$},
    270/{$\frac{T}{2}$},
    180/{$\frac{3T}{4}$}
    } 
    {
    \draw[thick] ({0.9*cos(\angle)}, {0.9*sin(\angle)}) -- ({1.1*cos(\angle)}, {1.1*sin(\angle)});
    }
\end{tikzpicture}
\caption{Arc representation}
\label{subfig:arc_efficient}
\end{subfigure}
\hfill
\begin{subfigure}[b]{0.3\textwidth}
\centering
\begin{tikzpicture}
    \node[draw=black, shape=circle] at (0, 2.5) (1) {$I_1$};
    \node[draw=black, shape=circle] at (3, 2.5) (2) {$I_2$};
    \node[draw=black, shape=circle] at (0, 0) (3) {$I_3$};
    \node[draw=black, shape=circle] at (3, 0) (4) {$I_4$};
    
    \draw[->, bend right] (1) to node[midway, above] {$\frac{T}{12}$} (2);
    \draw[->, bend right] (2) to node[midway, above] {$\frac{T}{12}$} (1);
    \draw[->, bend right] (3) to node[midway, above] {$\frac{T}{12}$} (4);
    \draw[->, bend right] (4) to node[midway, above] {$\frac{T}{12}$} (3);
\end{tikzpicture}
\caption{Graph representation}
\label{subfig:graph_efficient}
\end{subfigure}
\hfill
\begin{subfigure}[b]{0.3\textwidth}
\centering
\resizebox{\linewidth}{!}{
\begin{tikzpicture}
    \draw[->, thin] (0,0.5)--(2.33, 0.5);
    \draw[->, thin] (0,0.5)--(0, 2.5) node[above]{Worker};
    
    \draw[thin] (0, 0.4) node[below]{$0$}--(0, 0.6);
    \draw[thin] (1, 0.4) node[below]{$\frac{T}{2}$}--(1, 0.6);
    \draw[thin] (2, 0.4) node[below]{$T$}--(2, 0.6);

    \draw[thin] (-0.1, 1) node[left]{$1$}--(0.1, 1);
    \draw[thin] (-0.1, 2) node[left]{$2$}--(0.1, 2);

    \node[draw, thin, dotted, minimum width=0.8333cm, minimum height=0.5cm, align=center, pattern=north west lines] at (0.5, 1) {\scriptsize{$I_4$}};
    \node[draw, thin, dotted, minimum width=0.8333cm, minimum height=0.5cm, align=center, fill=black!40] at (1.5, 1) {\scriptsize{$I_3$}};

    \node[draw, thin, dotted, minimum width=0.4166cm, minimum height=0.5cm, align=center] at (0.2033, 2) {};
    \node[draw, thin, dotted, minimum width=0.8333cm, minimum height=0.5cm, align=center, fill=black!15] at (1, 2) {\scriptsize{$I_2$}};

    \def\x{1.7933}
    \def\y{2}
    \def\w{0.41333}
    \def\h{0.5}

    \pgfmathsetmacro{\leftx}{\x - \w/2}
    \pgfmathsetmacro{\rightx}{\x + \w/2}
    \pgfmathsetmacro{\bottomy}{\y - \h/2}
    \pgfmathsetmacro{\topy}{\y + \h/2}

    \draw[thin, dotted] (\leftx, \bottomy) -- (\leftx, \topy);

    \draw[thin, dotted] (\leftx, \topy) -- (\rightx, \topy);

    \draw[thin, dotted] (\rightx, \bottomy) -- (\leftx, \bottomy);
    \node at (\x, \y) {\scriptsize $I_1$};

\end{tikzpicture}
}
\caption{Schedule representation}
\label{subfig:schedule_efficient}
\end{subfigure}
\caption{Different representations of an efficient solution.}
\label{fig:solutions_efficient}
\end{figure}
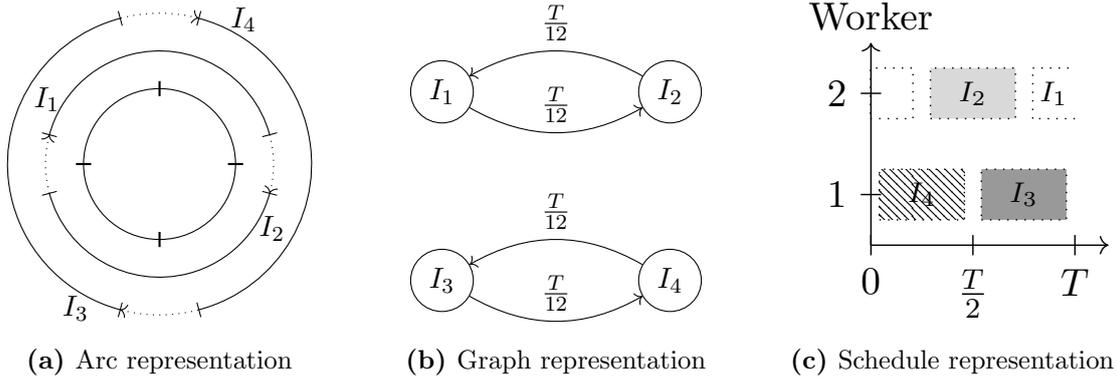

In general periodic assignments, workers corresponding to different cycles are consistently performing disjoint subsets of tasks (see \autoref{fig:solutions_efficient}). To enforce \emph{fairness}—that all workers perform exactly the same sequence of tasks—we require that $\mathcal{M}$ forms a single directed cycle covering all tasks, i.e., a Hamiltonian cycle. This gives rise to the fair variant of the PAP:

\begin{definition}\label{def:fpap}
Given a period $T \in \mathbb{N}$, a set of $T$-periodic tasks $\mathcal{I}$, and a transition graph $\mathcal{G} = (\mathcal{I}, \mathcal{A})$, the \emph{Fair Periodic Assignment Problem (FPAP)} is to find a subset of arcs $\mathcal{M} \subseteq \mathcal{A}$ such that:
\begin{enumerate}[a)]
    \item The total transition time $\sum_{(i,j) \in \mathcal{M}} c_{ij}$ is minimized,
    \item $\mathcal{M}$ defines a Hamiltonian cycle in $\mathcal{G}$.
\end{enumerate}
\end{definition}

Clearly, the FPAP is a special case of the classical Traveling Salesman Problem (TSP), where tasks act as cities and transition times define pairwise distances.

\autoref{fig:solutions_fair} displays three representations of a fair solution to a FPAP instance featuring the same set of tasks as \autoref{fig:solutions_efficient}. Since a fair solution featuring two workers is not possible, the fair solution requires the use of long transition arcs, as shown explicitly in Figures~\ref{subfig:arc_fair} and~\ref{subfig:graph_fair}. The schedule representation in Figure~\ref{subfig:schedule_fair} directly shows that the fair assignment requires three workers. Note that the period of the schedule is longer than $T$, the period of the instance.

\newcommand{\arc}[4]{
    \pgfmathsetmacro{\innerR}{#3-0.1}
    \pgfmathsetmacro{\outerR}{#3+0.1}
    \pgfmathsetmacro{\middle}{0.1 * #1 + 0.9 * #2}
    \draw[black] ({#3*cos(#1)}, {#3*sin(#1)}) arc (#1:#2:#3);
    \draw[black] ({\innerR *cos(#1)}, {\innerR * sin(#1)}) -- ({\outerR*cos(#1)}, {\outerR*sin(#1)});
    \draw[black] ({\innerR *cos(#2)}, {\innerR*sin(#2)}) -- ({\outerR*cos(#2)}, {\outerR*sin(#2)});
    \node at ({(\outerR + 0.11) *cos(\middle)}, {(\outerR + 0.11) *sin(\middle)}) {#4};
}

\newcommand{\transitionarc}[4]{
    \pgfmathsetmacro{\steps}{30}
    \pgfmathsetmacro{\deltaAngle}{(#2 - #1)/\steps}
    \pgfmathsetmacro{\deltaRadius}{(#4 - #3)/\steps}
    \foreach \i in {0,...,\steps} {
        \pgfmathsetmacro{\angleA}{#1 + \i * \deltaAngle}
        \pgfmathsetmacro{\angleB}{#1 + (\i + 1) * \deltaAngle}
        \pgfmathsetmacro{\radiusA}{#3 + \i * \deltaRadius}
        \pgfmathsetmacro{\radiusB}{#3 + (\i + 1) * \deltaRadius}
        \draw[loosely dotted] 
            ({\radiusA*cos(\angleA)}, {\radiusA*sin(\angleA)}) -- 
            ({\radiusB*cos(\angleB)}, {\radiusB*sin(\angleB)});
    }

    \pgfmathsetmacro{\startAngle}{#1}
    \pgfmathsetmacro{\startRadius}{#3}

    \pgfmathsetmacro{\smallStepAngle}{\startAngle + 5}  
    \pgfmathsetmacro{\smallStepRadius}{\startRadius + 0.02} 

    \coordinate (arrowstart) at ({\smallStepRadius*cos(\smallStepAngle)}, {\smallStepRadius*sin(\smallStepAngle)});

    \coordinate (arrowtip) at ({\startRadius*cos(\startAngle)}, {\startRadius*sin(\startAngle)});

    \draw[dotted, ->] (arrowstart) -- (arrowtip);
}
   
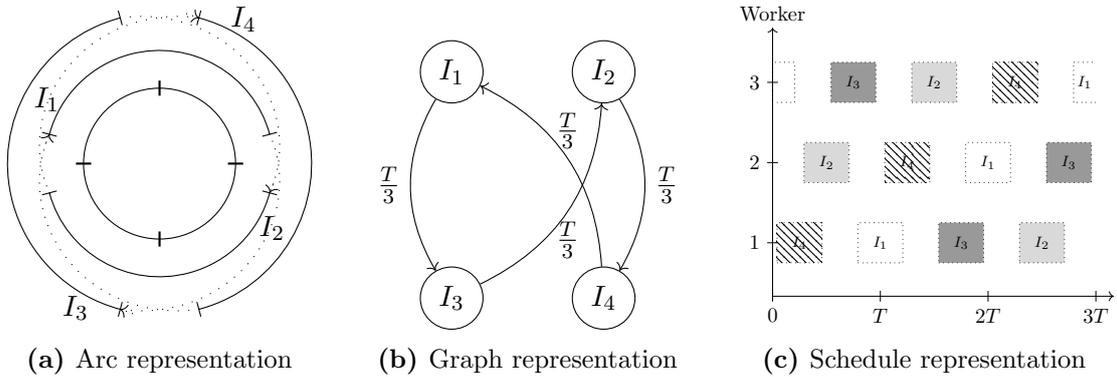
\begin{figure}[htbp!]
\begin{subfigure}[b]{0.3\textwidth}
\centering
\begin{tikzpicture}
    \draw[black] (0, 0) circle(1);

    \arc{15}{165}{1.5}{$I_1$};
    \arc{195}{345}{1.5}{$I_2$};
    \arc{105}{255}{2}{$I_3$};
    \arc{285}{435}{2}{$I_4$};

    \transitionarc{165}{285}{1.5}{2}    
    \transitionarc{255}{375}{2}{1.5}    
    \transitionarc{-15}{105}{1.5}{2}    
    \transitionarc{75}{195}{2}{1.5}    
    
    \foreach \angle/\label in {
    90/$0$,
    0/{$\frac{T}{4}$},
    270/{$\frac{T}{2}$},
    180/{$\frac{3T}{4}$}
    } 
    {
    \draw[thick] ({0.9*cos(\angle)}, {0.9*sin(\angle)}) -- ({1.1*cos(\angle)}, {1.1*sin(\angle)});
    }
\end{tikzpicture}
\caption{Arc representation}
\label{subfig:arc_fair}
\end{subfigure}
\hfill
\begin{subfigure}[b]{0.3\textwidth}
\centering
\begin{tikzpicture}
    \node[draw=black, shape=circle] at (0, 3) (1) {$I_1$};
    \node[draw=black, shape=circle] at (2, 3) (2) {$I_2$};
    \node[draw=black, shape=circle] at (0, 0) (3) {$I_3$};
    \node[draw=black, shape=circle] at (2, 0) (4) {$I_4$};
    
    \draw[->, bend right] (1) to node[midway, left] {$\frac{T}{3}$} (3);
    \draw[->, bend right] (3) to node[midway, below] {$\frac{T}{3}$} (2);
    \draw[->, bend left] (2) to node[midway, right] {$\frac{T}{3}$} (4);
    \draw[->, bend right] (4) to node[midway, above] {$\frac{T}{3}$} (1);
    
\end{tikzpicture}
\caption{Graph representation}
\label{subfig:graph_fair}
\end{subfigure}
\hfill
\begin{subfigure}[b]{0.35\textwidth}
\centering
\resizebox{\linewidth}{!}{
\begin{tikzpicture}
    
    \draw[->, thin] (0, 0.5)--(6.33, 0.5);
    \draw[->, thin] (0, 0.5)--(0, 5.52) node[above]{Worker};

    \draw[thin] (0, 0.4) node[below]{$0$}--(0, 0.6);
    \draw[thin] (2, 0.4) node[below]{$T$}--(2, 0.6);
    \draw[thin] (4, 0.4) node[below]{$2T$}--(4, 0.6);
    \draw[thin] (6, 0.4) node[below]{$3T$}--(6, 0.6);

    \draw[thin] (-0.1, 1.5) node[left]{$1$}--(0.1, 1.5);
    \draw[thin] (-0.1, 3) node[left]{$2$}--(0.1, 3);
    \draw[thin] (-0.1, 4.5) node[left]{$3$}--(0.1, 4.5);

     \node[draw, thin, dotted, minimum width=0.8333cm, minimum height=0.75cm, align=center, pattern=north west lines] at (0.5, 1.5) {\scriptsize{$I_4$}};
     \node[draw, thin, dotted, minimum width=0.8333cm, minimum height=0.75cm, align=center] at (2, 1.5) {\scriptsize{$I_1$}};
     \node[draw, thin, dotted, minimum width=0.8333cm, minimum height=0.75cm, align=center, fill=black!40] at (3.5, 1.5) {\scriptsize{$I_3$}};
     \node[draw, thin, dotted, minimum width=0.8333cm, minimum height=0.75cm, align=center, fill=black!15] at (5, 1.5) {\scriptsize{$I_2$}};

     \node[draw, thin, dotted, minimum width=0.8333cm, minimum height=0.75cm, align=center, fill=black!15] at (1, 3) {\scriptsize{$I_2$}};
     \node[draw, thin, dotted, minimum width=0.8333cm, minimum height=0.75cm, align=center, pattern=north west lines] at (2.5, 3) {\scriptsize{$I_4$}};
     \node[draw, thin, dotted, minimum width=0.8333cm, minimum height=0.75cm, align=center] at (4, 3) {\scriptsize{$I_1$}};
     \node[draw, thin, dotted, minimum width=0.8333cm, minimum height=0.75cm, align=center, fill=black!40] at (5.5, 3) {\scriptsize{$I_3$}};

     \node[draw, thin, dotted, minimum width=0.4166cm, minimum height=0.75cm, align=center] at (0.2033, 4.5) {};
     \node[draw, thin, dotted, minimum width=0.8333cm, minimum height=0.75cm, align=center, fill=black!40] at (1.5, 4.5) {\scriptsize{$I_3$}};
     \node[draw, thin, dotted, minimum width=0.8333cm, minimum height=0.75cm, align=center, fill=black!15] at (3, 4.5) {\scriptsize{$I_2$}};
     \node[draw, thin, dotted, minimum width=0.8333cm, minimum height=0.75cm, align=center, pattern=north west lines] at (4.5, 4.5) {\scriptsize{$I_4$}};

    \def\x{5.7933}
    \def\y{4.5}
    \def\w{0.41333}
    \def\h{0.75}

    \pgfmathsetmacro{\leftx}{\x - \w/2}
    \pgfmathsetmacro{\rightx}{\x + \w/2}
    \pgfmathsetmacro{\bottomy}{\y - \h/2}
    \pgfmathsetmacro{\topy}{\y + \h/2}

    \draw[thin, dotted] (\leftx, \bottomy) -- (\leftx, \topy);

    \draw[thin, dotted] (\leftx, \topy) -- (\rightx, \topy);

    \draw[thin, dotted] (\rightx, \bottomy) -- (\leftx, \bottomy);
    \node at (\x, \y) {\scriptsize $I_1$};
\end{tikzpicture}
}
\caption{Schedule representation}
\label{subfig:schedule_fair}
\end{subfigure}
\caption{Different representations of a fair solution.}
\label{fig:solutions_fair}
\end{figure}

Finally, we introduce a lower bound on the number of workers required in any periodic assignment. Let $\tasks(t)$ denote the number of tasks that intersect time $t \in [0, T)$, and define the system’s \emph{load} as
$$L:=\max_{t\in[0,T)} \tasks(t).$$
The instance of \autoref{fig:solutions_efficient} has a load of $L=2$, matching the number of workers in the efficient solution.

\subsection{Background and Related Literature}
In the special case where there is some $t'$ such that $\tasks(t')=0$, \cite{korst1994periodic} show that PAP reduces to coloring an interval graph, yielding a periodic assignment with $L$ workers, matching the lower bound. \cite{gupta} provide an $\mathcal{O}(n \log n)$ algorithm for computing such a coloring. Also the FPAP is easily solved in this case: at time $t'$ all workers are idling, so $L$ workers can cycle through the $L$ colors to cover all tasks, again matching the lower bound. 

In the general case, $L$ workers still suffice for PAP, as follows from Dilworth’s theorem \cite{orlin1982minimizing}. An optimal assignment can be found via linear programming, the Hungarian algorithm, or the $\mathcal{O}(n^2 \log n)$ algorithm of \cite{vanlieshout2021integrated}.

To the best of our knowledge, we are the first to study the FPAP. However, a similar problem is considered in \cite{gachet_et_al:OASIcs.ATMOS.2024.5}, which was presented at ATMOS 2024 and served as a direct inspiration for this work. Rather than requiring strict fairness, the authors consider \textit{balanced} assignments that are asymptotically fair, i.e., where workers perform the same work in the long-run average. They reduce the problem to a construction involving pebbles on arc-colored Eulerian graphs, showing that if a balanced assignment for $w$ workers exists, it can be found in polynomial time. Their initial construction yields a period bounded by $w^2 \cdot w!$, but in a subsequent extension \cite{gachet2025balancedassignmentsperiodictasks}, the authors improve this to a linear period. 

Other special cases of the TSP are surveyed in \cite{burkard1998well} and other cyclic scheduling problems are discussed in \cite{LEVNER2010352}.

\subsection{Main Contributions}
The main contributions of this paper are fourfold. First, we present an $\mathcal{O}(n \log n)$ exact algorithm for the PAP, improving upon the previous best $\mathcal{O}(n^2 \log n)$ runtime. Second, we prove that the classical Nearest Neighbor heuristic for TSP yields a fair periodic assignment requiring at most $L + 1$ workers, implying that the \emph{price of fairness} is at most one additional worker. This bound is tight. Third, we develop an $\mathcal{O}(n \log n)$ exact algorithm for the FPAP based on subtour patching. Fourth, we show that allowing longer assignment periods and requiring balancedness rather than fairness does not reduce the number of required workers.

\section{Periodic Assignment}
\label{sec:periodic}

In this section, we characterize optimal solutions to PAP, which we use later on to solve the FPAP. Moreover, we present a new $\mathcal{O}(n\log n)$ algorithm for solving PAP.

\subsection{Theory}
\label{subsec:periodic_theory}
We begin by defining notation relevant to the PAP. Transition arc $(i,j)\in \mathcal{A}$ corresponds to the $T$-periodic interval $[b_i,a_j]$. Analogous to $\tasks(t)$, for a periodic assignment $\mathcal{M}\subseteq \mathcal{A}$, let $\mathcal{M}(t)$ denote the number of transition arcs in $\mathcal{M}$ whose associated interval intersects time $t\in [0,T)$, and let  

$$c(\tasks) := \sum_{i\in \tasks} c(i) = \int_0^T \tasks(t)dt \quad \text{ and } \quad c(\mathcal{M}) := \sum_{(i,j)\in \mathcal{M}}c_{ij} = \int_0^T \mathcal{M}(t)dt. $$
Since workers are either performing tasks or transitioning between tasks, it holds that 
the sum $\tasks(t)+\mathcal{M}(t)$ equals the number of workers required to operate the schedule for all $t\in [0,T)$, which also equals $\left(c(\tasks)+c(\mathcal{M})\right)/T$. Combining insights from \cite{orlin1982minimizing} and \cite{vanlieshout2021integrated}, we now characterize the optimal periodic assignment, which uses exactly $L$ workers. 

\begin{theorem}
    \label{thm:pap}
    Let $\mathcal{M}^*\subseteq \mathcal{A}$ denote a feasible periodic assignment. The following statements are equivalent:
    \begin{enumerate}
        \item[(i)] $\mathcal{M}^*$ is an optimal solution to PAP,
        \item[(ii)] $c(\tasks)+c(\mathcal{M^*})=LT$,
        \item[(iii)] $\tasks(t)+\mathcal{M^*}(t) = L$ for all $t \in [0,T)$,
        \item[(iv)] $\mathcal{M^*}(t') = 0$ for some $t'\in [0,T)$.
    \end{enumerate}
\end{theorem}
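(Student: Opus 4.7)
The key preliminary fact, noted in the paragraph preceding the theorem, is that for any feasible assignment $\mathcal{M}$ the pointwise sum $\tasks(t)+\mathcal{M}(t)$ equals the constant $W:=(c(\tasks)+c(\mathcal{M}))/T$, the number of workers required to operate the schedule. The plan is to use this identity as the backbone linking the four conditions, with the cited prior work providing the anchor that ties optimality to $W=L$.

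First, I would establish the trivial lower bound $W\ge L$: since $\mathcal{M}(t)\ge 0$, we have $\tasks(t)\le W$ for every $t$, and maximizing over $t$ gives $L\le W$; equivalently, $c(\tasks)+c(\mathcal{M})\ge LT$ for every feasible $\mathcal{M}$. Combined with the cited result of Orlin (which, via Dilworth's theorem, guarantees a feasible assignment attaining $W=L$), the optimum value of $c(\mathcal{M})$ is exactly $LT-c(\tasks)$. This yields (i)$\iff$(ii): since $c(\tasks)$ is a constant, $\mathcal{M}^*$ minimizes $c(\mathcal{M}^*)$ if and only if $c(\tasks)+c(\mathcal{M}^*)=LT$.

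For (ii)$\iff$(iii), integrating the pointwise identity gives $c(\tasks)+c(\mathcal{M}^*)=\int_0^T W\,dt = WT$, so (ii) amounts to $W=L$, and pointwise constancy of $\tasks(t)+\mathcal{M}^*(t)$ then upgrades this to $\tasks(t)+\mathcal{M}^*(t)=L$ for all $t$. For (iii)$\implies$(iv), I would choose any $t^\star$ achieving the maximum $\tasks(t^\star)=L$; then (iii) forces $\mathcal{M}^*(t^\star)=0$. Conversely, for (iv)$\implies$(ii): if $\mathcal{M}^*(t')=0$ at some $t'$, the identity gives $W=\tasks(t')+\mathcal{M}^*(t')=\tasks(t')\le L$, which combined with $W\ge L$ forces $W=L$, i.e.\ (ii).

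The main obstacle is not a technical estimate but a conceptual check: rigorously justifying the pointwise identity $\tasks(t)+\mathcal{M}^*(t)=W$ \emph{at every} $t$, rather than only on average. The justification is that each of the $W$ workers, at each moment $t$, occupies either exactly one task-interval (contributing to $\tasks(t)$) or exactly one transition-interval (contributing to $\mathcal{M}^*(t)$), so summation over workers gives the claim. Once this identity is in hand alongside the lower bound $W\ge L$ and the existence of an $L$-worker assignment from prior work, the four conditions collapse to the single statement $W=L$ by straightforward bookkeeping.
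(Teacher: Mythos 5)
Your proof is correct, but it reaches the conclusion by a genuinely different route than the paper for the one substantive direction. You and the paper agree on the easy part: the pointwise identity $\tasks(t)+\mathcal{M}^*(t)=W$ with $W=(c(\tasks)+c(\mathcal{M}^*))/T$ constant, the elementary lower bound $W\ge L$, and the resulting equivalence of (ii), (iii), and (iv). The difference is in how optimality is tied to $W=L$. You import the achievability of $L$ workers from the cited prior work (Orlin via Dilworth's theorem), which pins the optimal value at $LT-c(\tasks)$ and makes (i)$\iff$(ii) immediate. The paper instead closes the cycle with a self-contained exchange argument for (i)$\Rightarrow$(iv): if $\mathcal{M}^*(t)\ge 1$ for all $t$, one extracts transition intervals $[b_1,a_1],\dots,[b_m,a_m]$ covering the whole circle and re-matches them as $[b_1,a_m],[b_2,a_1],\dots,[b_m,a_{m-1}]$, saving a full period $T$ and contradicting optimality. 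That argument buys two things: the theorem (and hence the achievability of the $L$-worker bound) is proved from scratch rather than by citation, and the re-matching step is constructive --- it is essentially the same local exchange that reappears later as the patching operation in Section~3. Your version is shorter and perfectly valid given that the paper itself states the Dilworth-based fact in its background section, but it is not self-contained in the way the paper's proof is; if you want to avoid the external dependency, you would need to supply an argument like the paper's covering-and-shortening exchange (or an equivalent construction of an $L$-worker assignment).
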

\begin{proof}
    $(iv) \Rightarrow (iii)$: Since all workers are either busy performing tasks or transitioning, the total number of workers at any time $t$ is $\tasks(t) + \mathcal{M}^*(t)$. By assumption, there exists a $t' \in [0, T)$ such that $\mathcal{M}^*(t') = 0$, implying $\tasks(t') \leq L$. Conversely, since $\tasks(t)$ attains its maximum at some $t''$, we have $\tasks(t'') = L$, so $\tasks(t'') + \mathcal{M}^*(t'') \geq L$. Because the total number of workers is constant over time, we must have that $
    \tasks(t) + \mathcal{M}^*(t) = L \text{ for all } t \in [0,T).$

    $(iii) \Rightarrow (ii)$: Integrating over the interval $[0,T)$ yields:
    \[
    c(\tasks) + c(\mathcal{M}^*) = \int_0^T \tasks(t) + \mathcal{M}^*(t) \, dt = \int_0^T L \, dt = LT.
    \]

    $(ii) \Rightarrow (i)$: If $c(\tasks) + c(\mathcal{M}^*) = LT$, then the schedule uses exactly $L$ workers over time, matching the theoretical lower bound. Hence, $\mathcal{M}^*$ must be optimal.

    $(i) \Rightarrow (iv)$: Suppose, for contradiction, that $\mathcal{M}^*(t) \geq 1$ for all $t \in [0, T)$. Then, there exists a subset of transition arcs whose intervals fully cover $[0, T)$. Without loss of generality, let these intervals be $\mathcal{N}^* = \{[b_1, a_1], \ldots, [b_m, a_m]\}$, where 
        \begin{equation*}
        {b_1} < {a_m} < {b_2} < {a_1} < \ldots < {b_m} < {a_{m-1}}.
    \end{equation*}
    We can now construct a shorter matching $\mathcal{N}' = \{[b_1, a_m], [b_2, a_1], \ldots, [b_m, a_{m-1}]\}$, which reduces the total transition time. This contradicts the optimality of $\mathcal{M}^*$. Therefore, $\mathcal{M}^*(t) = 0$ must hold for some $t \in [0, T)$.
\end{proof}

\subsection{Algorithms}
\label{subsec:periodic_algo}
Every transition arc $(i,j)\in \mathcal{A}$ in the PAP corresponds to a periodic interval $[b_i, a_j]$, and its cost depends only on the length of this interval. Therefore, solving PAP is equivalent to matching each end time $b_i$ to a start time $a_j$ such that the resulting intervals minimize the total transition time. 

Crucially, the identity of individual tasks does not affect the optimality of the assignment — only the multiset of start and end times matters. This insight enables an efficient algorithm, which we call \textsc{Shift-Sort-and-Match}. It improves upon the \(\mathcal{O}(n^2 \log n)\) method of \cite{vanlieshout2021integrated} by reducing the complexity to \(\mathcal{O}(n \log n)\), primarily by exploiting this structural invariance.

Algorithm~\ref{alg:sorted_greedy} outlines the procedure. The algorithm first shifts all start and end times such that \(\tasks(0)=L\), the maximum number of simultaneously active tasks. It then sorts all start and end times into a single array, and greedily matches each end time to the earliest available start time. Note that the algorithm computes an optimal assignment in terms of a matching between end and start times (still denoted by \(\mathcal{M}\)); the task-to-task assignment can be recovered by tracing these matched time points back to their associated tasks.

The shifting step of Algorithm~\ref{alg:sorted_greedy} requires both the maximum $L$ and a maximizer $t^*$ of $\tasks(t)$. After sorting all events, these can be computed by sweeping through the timeline and maintaining a counter that increments at every start time and decrements at every end time, while keeping track of the interval $(a_i,b_j)$ where the where the counter reaches its highest value. The maximum load $L$ is then given by the largest value observed by the counter. Finally, any point within the interval $(a_i, b_j)$ where this maximum occurs can be chosen as the maximizer $t^*$. This procedure takes $\mathcal{O}(n\log n)$ due to the initial sorting step. 

\begin{algorithm}
\caption{\textsc{Shift-Sort-and-Match}}
\label{alg:sorted_greedy}
\begin{algorithmic}[1]
\STATE Shift all start and end times such that $\tasks(0) = L$
\STATE Construct array $R$ with all starts and ends, each tagged with its type
\STATE Sort $R$ in increasing order; break ties by placing ends before starts
\STATE Initialize empty stack $Q$ for unmatched ends
\STATE Initialize matching $\mathcal{M} \gets \emptyset$
\STATE Set $i \gets 1$
\WHILE{$i \leq |R|$}
    \IF{$R[i]$ is a start}
        \STATE Pop top element $e$ from stack $Q$
        \STATE Add arc $(e, R[i])$ to $\mathcal{M}$
        \STATE $i \gets i + 1$
    \ELSIF{$R[i+1]$ is a start}
        \STATE Add arc $(R[i], R[i+1])$ to $\mathcal{M}$
        \STATE $i \gets i + 2$
    \ELSE
        \STATE Push $R[i]$ onto stack $Q$
        \STATE $i \gets i + 1$
    \ENDIF
\ENDWHILE
\STATE \textbf{return} $\mathcal{M}$
\end{algorithmic}
\end{algorithm}

\begin{theorem}
The \textsc{Shift-Sort-and-Match} algorithm computes an optimal periodic assignment \(\mathcal{M}^*\) for the Periodic Assignment Problem in \(\mathcal{O}(n \log n)\) time.
\end{theorem}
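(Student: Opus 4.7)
The plan is to establish correctness via \autoref{thm:pap}, specifically the implication $(iv)\Rightarrow(i)$: any feasible assignment $\mathcal{M}^*$ with $\mathcal{M}^*(t')=0$ for some $t'\in[0,T)$ is optimal. The shifting step is engineered precisely so that $t'=0$ is such a witness: after shifting, $\tasks(0)=L$, so by $(iii)$ an optimal assignment cannot contain any transition arc whose associated interval intersects time $0$. It therefore suffices to show that \textsc{Shift-Sort-and-Match} returns a feasible matching in which no arc crosses time $0$ in the shifted frame.

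First I would verify that the algorithm produces a complete matching (each end is paired with exactly one start and vice versa), with every pair $(b_i,a_j)$ satisfying $b_i\le a_j$ in the shifted frame. The key invariant is that, at the moment any start event is about to be processed, the number of end events already processed strictly exceeds the number of start events already processed. This follows from $\tasks(t)\le L$ combined with $\tasks(0)=L$ and the tie-breaking rule placing ends before starts: if the counts had balanced, then $\tasks$ at this moment would equal $L$, and processing the pending start would push $\tasks$ above $L$, contradicting the definition of $L$. The invariant guarantees that the stack $Q$ is non-empty whenever a pop is attempted, so every start and end receives a partner. Moreover, each pair returned---via either the direct end-start adjacency branch or via the pop branch---matches an end at an earlier-or-equal time to a start at a later-or-equal time in the shifted frame. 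Hence no arc wraps around, $\mathcal{M}^*(0)=0$, and \autoref{thm:pap} gives optimality.

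For the complexity, computing $L$ and a maximizer $t^*$ needed for shifting takes $\mathcal{O}(n\log n)$ via a sweep over sorted events, as already explained in the text. The sort in line~3 is also $\mathcal{O}(n\log n)$, and the main while-loop processes each of the $2n$ events a constant number of times with $\mathcal{O}(1)$ stack operations, contributing only $\mathcal{O}(n)$. The total running time is therefore $\mathcal{O}(n\log n)$.

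I expect the main obstacle to be a clean statement and proof of the ``strictly more ends than starts processed so far'' invariant, particularly in the presence of ties at event times and of the direct end-start short-circuit branch that bypasses the stack. Ruling out the degenerate case of an empty stack at the moment of a pop hinges on the shift making $t=0$ a peak-load time, which effectively reduces the cyclic matching problem to an acyclic one on $[0,T)$; once this reduction is made precise, both the feasibility of the output matching and the non-wrapping of every arc---and hence optimality via \autoref{thm:pap}---follow essentially by inspection of the loop.
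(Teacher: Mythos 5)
Your proposal is correct and follows essentially the same route as the paper's own proof: both establish feasibility via the invariant that the surplus of processed ends over processed starts (equivalently, the stack size) equals $L-\tasks(t)$, which is guaranteed nonnegative by the shift making $t=0$ a peak-load point, and both conclude optimality by observing that no matched arc wraps around $t=0$, so $\mathcal{M}^*(0)=0$ and condition $(iv)$ of \autoref{thm:pap} applies. The complexity analysis is likewise identical.
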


\begin{proof}
We begin by analyzing the time complexity. Recall that computing $L$ and the maximizer $t^*$ takes $\mathcal{O}(n \log n)$. Sorting $2n$ time points also takes $\mathcal{O}(n \log n)$, and the main loop of the algorithm executes $\mathcal{O}(n)$ operations, each taking $\mathcal{O}(1)$ time. Hence, the total runtime is $\mathcal{O}(n \log n)$.

To prove correctness, we show that the algorithm constructs a feasible assignment satisfying condition $(iv)$ from Theorem~\ref{thm:pap}, which implies optimality. 

We first show that the stack size when processing an event at time $t$ equals $L-\tasks(t)$. Initially, the stack is empty and $\tasks(t)=L$. Each step of the algorithm maintains this invariant through the following cases:

\begin{itemize}
    \item \textbf{Case 1 (Line 8):} If \(R[i]\) is a start, then \(\tasks(t)\) increases by 1, and the stack size decreases by 1.
    \item \textbf{Case 2 (Line 12):} If \(R[i+1]\) is a start, then \(\tasks(t)\) first decreases by 1 (due to the end \(R[i]\)), then increases by 1 (due to the start \(R[i+1]\)); the stack remains unchanged.
    \item \textbf{Case 3 (Line 15):} Otherwise, \(R[i]\) is an end; \(\tasks(t)\) decreases by 1, and the stack size increases by 1.
\end{itemize}

This invariant yields the following consequences:
\begin{enumerate}
    \item The stack size remains nonnegative at all times.
    \item When processing a start, the stack must be nonempty, ensuring a valid match.
    \item At termination, \(\tasks(t) = L\) again, so the stack is empty.
\end{enumerate}

Together, these observations confirm that all starts are matched to ends—either directly via Line~13 or by popping from the stack in Line~9—ensuring feasibility of the assignment.

For optimality, note that all arcs match from a lower index to a higher index. In particular, no transition arcs `loop' around time $t=0$. Hence $\mathcal{M}^*(0)=0$, which implies the solution is optimal by property $(iv)$ of \autoref{thm:pap}.
\end{proof}

\section{Fair Periodic Assignment}
\label{sec:fair}

Imposing fairness in the periodic assignment problem amounts to requiring that the assignment $(\mathcal{M})$ defines a Hamiltonian cycle in the transition graph. Unlike in the standard PAP, this condition makes the specific identities of transition arcs essential, as these determine whether there are any subtours or not. 

This additional constraint makes the problem strictly harder: it is no longer guaranteed that a solution exists using only \(L\) workers. As illustrated in \autoref{fig:solutions_fair}, certain FPAP instances necessarily involve long transition arcs, forcing any fair solution to use \(L+1 = 3\) workers instead of \(L = 2\). Remarkably, we show that this is the worst-case scenario: every FPAP instance can be solved using either \(L\) or \(L+1\) workers.

This section proceeds as follows. First, we prove that the Nearest Neighbor heuristic always yields a fair solution with at most \(L+1\) workers. Then, we characterize the class of FPAP instances that admit a fair solution using exactly \(L\) workers. Building on this result, we then present an exact algorithm for solving FPAP.

\subsection{Nearest Neighbor}
The \textsc{Nearest Neighbor} heuristic is a classical heuristic for the TSP, and repeatedly matches the current task with the closest unvisited task until all tasks are visited. This is formalized in Algorithm~\ref{alg:nearest_Neighbor}.

\begin{algorithm}
\caption{\textsc{Nearest Neighbor}}
\label{alg:nearest_Neighbor}
\begin{algorithmic}[1]
\STATE Set current task $i$ to arbitrary task $u$
\STATE Store unvisited tasks $\mathcal{I} \setminus \{i\}$ in $\mathcal{U}$
\STATE Sort $\mathcal{U}$ based on increasing start time
\STATE Initialize matching $\mathcal{M} \leftarrow \emptyset$
\WHILE{$|\mathcal{U}| \geq 1$}
    \STATE Let $v$ be task in $\mathcal{U}$ closest to $i$
    \STATE Add arc $(i, v)$ to $\mathcal{M}$
    \STATE Remove $v$ from $\mathcal{U}$
    \STATE $i \leftarrow v$
\ENDWHILE 
\STATE Add arc $(i, u)$ to $\mathcal{M}$
\STATE \textbf{return} $\mathcal{M}$
\end{algorithmic}
\end{algorithm}

To demonstrate the heuristic, we consider the same instance as in \autoref{fig:solutions_efficient} but now without task $I_2$. \autoref{fig:nn} illustrates that \textsc{Nearest Neighor} returns a solution with three workers, while the optimal solution only uses two. As the following theorem shows, this corresponds to the worst-case performance of this heuristic.

\newcommand{\arc}[4]{
    \pgfmathsetmacro{\innerR}{#3-0.1}
    \pgfmathsetmacro{\outerR}{#3+0.1}
    \pgfmathsetmacro{\middle}{0.1 * #1 + 0.9 * #2}
    \draw[black] ({#3*cos(#1)}, {#3*sin(#1)}) arc (#1:#2:#3);
    \draw[black] ({\innerR *cos(#1)}, {\innerR * sin(#1)}) -- ({\outerR*cos(#1)}, {\outerR*sin(#1)});
    \draw[black] ({\innerR *cos(#2)}, {\innerR*sin(#2)}) -- ({\outerR*cos(#2)}, {\outerR*sin(#2)});
    \node at ({(\outerR + 0.11) *cos(\middle)}, {(\outerR + 0.11) *sin(\middle)}) {#4};
}

\newcommand{\transitionarc}[4]{
    \pgfmathsetmacro{\steps}{30}
    \pgfmathsetmacro{\deltaAngle}{(#2 - #1)/\steps}
    \pgfmathsetmacro{\deltaRadius}{(#4 - #3)/\steps}
    \foreach \i in {0,...,\steps} {
        \pgfmathsetmacro{\angleA}{#1 + \i * \deltaAngle}
        \pgfmathsetmacro{\angleB}{#1 + (\i + 1) * \deltaAngle}
        \pgfmathsetmacro{\radiusA}{#3 + \i * \deltaRadius}
        \pgfmathsetmacro{\radiusB}{#3 + (\i + 1) * \deltaRadius}
        \draw[loosely dotted] 
            ({\radiusA*cos(\angleA)}, {\radiusA*sin(\angleA)}) -- 
            ({\radiusB*cos(\angleB)}, {\radiusB*sin(\angleB)});
    }

    \pgfmathsetmacro{\startAngle}{#1}
    \pgfmathsetmacro{\startRadius}{#3}

    \pgfmathsetmacro{\smallStepAngle}{\startAngle + 5}  
    \pgfmathsetmacro{\smallStepRadius}{\startRadius + 0.02} 

    \coordinate (arrowstart) at ({\smallStepRadius*cos(\smallStepAngle)}, {\smallStepRadius*sin(\smallStepAngle)});

    \coordinate (arrowtip) at ({\startRadius*cos(\startAngle)}, {\startRadius*sin(\startAngle)});

    \draw[dotted, ->] (arrowstart) -- (arrowtip);
}
   
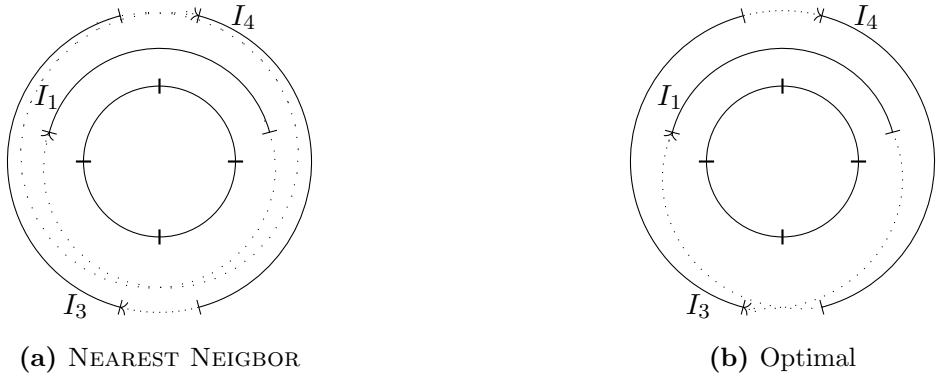
\begin{figure}[htbp!]
\begin{subfigure}[b]{0.45\textwidth}
\centering
\begin{tikzpicture}
    \draw[black] (0, 0) circle(1);

    \arc{15}{165}{1.5}{$I_1$};
    \arc{105}{255}{2}{$I_3$};
    \arc{285}{435}{2}{$I_4$};

    \draw[dotted, <-] ({2*cos(255)}, {2*sin(255)}) arc (255:285:2);

    \transitionarc{75}{375}{2}{1.5}   
    \transitionarc{165}{465}{1.5}{2}
    
    \foreach \angle/\label in {
    90/$0$,
    0/{$\frac{T}{4}$},
    270/{$\frac{T}{2}$},
    180/{$\frac{3T}{4}$}
    } 
    {
    \draw[thick] ({0.9*cos(\angle)}, {0.9*sin(\angle)}) -- ({1.1*cos(\angle)}, {1.1*sin(\angle)});
    }
\end{tikzpicture}
\caption{\textsc{Nearest Neigbor}}
\end{subfigure}
\hfill
\begin{subfigure}[b]{0.45\textwidth}
\centering
\begin{tikzpicture}
    \draw[black] (0, 0) circle(1);

    \arc{15}{165}{1.5}{$I_1$};
    \arc{105}{255}{2}{$I_3$};
    \arc{285}{435}{2}{$I_4$};

    \draw[dotted, <-] ({2*cos(435)}, {2*sin(435)}) arc (75:105:2);
    \transitionarc{165}{285}{1.5}{2}    
    \transitionarc{255}{375}{2}{1.5}    

    \foreach \angle/\label in {
    90/$0$,
    0/{$\frac{T}{4}$},
    270/{$\frac{T}{2}$},
    180/{$\frac{3T}{4}$}
    } 
    {
    \draw[thick] ({0.9*cos(\angle)}, {0.9*sin(\angle)}) -- ({1.1*cos(\angle)}, {1.1*sin(\angle)});
    }
\end{tikzpicture}
\caption{Optimal}
\end{subfigure}
\caption{\textsc{Nearest Neighbor} solution versus optimal solution.}
\label{fig:nn}
\end{figure}

\begin{theorem}
The \textsc{Nearest Neighbor} algorithm returns a fair periodic assignment requiring at most $L+1$ workers in $\mathcal{O}(n \log n)$ time.
\end{theorem}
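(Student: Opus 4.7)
My plan is to split the argument into three parts: the runtime and fairness, a reduction of the worker bound to a structural claim, and proving that claim by contradiction.

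For the runtime, I would observe that sorting $\mathcal{U}$ by start time costs $\mathcal{O}(n \log n)$, and in each of the $n$ iterations the closest unvisited task forward from $b_i$ can be located and removed in $\mathcal{O}(\log n)$ time using a balanced search tree (or in $\mathcal{O}(1)$ amortized with a circular doubly linked list and a sweeping pointer), giving $\mathcal{O}(n \log n)$ total. Fairness is immediate: by construction $\mathcal{M}_{NN}$ is a single Hamiltonian cycle in $\mathcal{G}$. Next, to bound the worker count, define $W := (c(\mathcal{I}) + c(\mathcal{M}_{NN}))/T$. By the same flow argument used in the proof of Theorem~\ref{thm:pap}, the map $t \mapsto \mathcal{I}(t) + \mathcal{M}_{NN}(t)$ is constant on $[0,T)$ and equals $W$. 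Choosing any $t^*$ with $\mathcal{I}(t^*) = L$ gives $W = L + \mathcal{M}_{NN}(t^*)$, so the desired bound $W \le L+1$ reduces to the key claim that at most one transition arc of $\mathcal{M}_{NN}$ covers $t^*$.

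I would prove the key claim by contradiction. Suppose $e_1 = (u_1,v_1)$ and $e_2 = (u_2,v_2)$ are two transition arcs of $\mathcal{M}_{NN}$ both covering $t^*$, with $u_1$ visited before $u_2$ in the NN order. By the greedy rule at step $e_1$, no unvisited task has its start inside the forward arc $(b_{u_1},a_{v_1}) \ni t^*$; since $u_2$ and $v_2$ are unvisited at that moment, their starts $a_{u_2}, a_{v_2}$ lie in the complementary forward arc $(a_{v_1},b_{u_1})$ which avoids $t^*$. Between $e_1$ and $e_2$ all intermediate arcs of $\mathcal{M}_{NN}$ are non-wrapping with respect to $t^*$, so in this stretch NN visits tasks with starts in strictly increasing circular order. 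Combining this ordering invariant with the fact that $L$ tasks are active at $t^*$, I would exhibit a task $j$ still unvisited at the moment of $e_2$ whose start lies strictly inside $(b_{u_2},a_{v_2})$, contradicting the minimality of NN's choice $v_2$ at step $e_2$.

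The main obstacle is making this last step airtight. The difficulty is to simultaneously track which tasks are visited and unvisited at the two distinct moments $e_1$ and $e_2$, to handle the interaction between wrapping tasks (those active at $t^*$) and wrapping transition arcs, and to exploit the increasing-start-order invariant along non-wrapping stretches to pin down a specific active-at-$t^*$ task whose greedy availability at $e_2$ contradicts NN's choice. I expect the cleanest combinatorial route is to set up a bijection between NN-crossings of $t^*$ and laps of the tour (where each of the $L$ active tasks contributes exactly one task-crossing) and to argue that two transition-crossings are structurally incompatible with the greedy rule.
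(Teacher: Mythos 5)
Your runtime and fairness arguments are fine and match the paper, and your reduction via the flow identity $W=\mathcal{I}(t)+\mathcal{M}_{NN}(t)$ is valid: the bound $W\le L+1$ is indeed equivalent to showing $\mathcal{M}_{NN}(t^*)\le 1$ at a maximum-load point $t^*$. But that equivalence means the "key claim" carries the entire content of the theorem, and your proof of it has a genuine gap --- one you partly acknowledge. Two specific steps do not hold up. First, the inference "all intermediate arcs are non-wrapping with respect to $t^*$, so in this stretch NN visits tasks with starts in strictly increasing circular order" is false: even if $e_1$ and $e_2$ are consecutive \emph{transition} crossings of $t^*$, the tour can make several full laps between them, crossing $t^*$ each time via one of the $L$ tasks active at $t^*$; after each such task-crossing the circular order of starts "resets," so no monotonicity invariant survives. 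Second, the existence of an unvisited task $j$ with $a_j$ strictly inside $(b_{u_2},a_{v_2})$ is exactly what needs to be proved and is not supplied: a task active at $t^*$ may have its start at or before $b_{u_2}$, and it may also already have been visited by the time $e_2$ is traversed, so neither of the two natural candidates for $j$ is guaranteed to work. As written, the argument does not close.

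The paper avoids this difficulty by choosing the point to be crossed more cleverly: it takes $t=0$ at the start of the first visited task and observes that if more than $L+1$ workers were needed, some task $i=(a,b)$ would be started only after $L$ full revolutions. Each of those $L$ earlier revolutions must pass the time point $a$, and none of these passes can occur during a transition arc, because at that moment task $i$ is still unvisited and its start $a$ would lie strictly inside the transition interval, contradicting the greedy choice. Hence each of the $L$ passes occurs inside a distinct task, and together with task $i$ itself this gives $\mathcal{I}(a+\varepsilon)=L+1$, contradicting the definition of $L$. The crucial idea you are missing is to anchor the contradiction at the \emph{start time of a still-unvisited task} (where the greedy rule bites directly) rather than at a generic maximum-load point $t^*$ (where a single transition crossing is perfectly legal, and ruling out a second one requires exactly this extra idea). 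I would recommend either adopting that argument or reworking your lap-counting sketch so that the point being crossed is $a_i$ for a provably late task $i$.
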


\begin{proof}
We first analyze the runtime. A crucial observation is that the distance of the current task to the next depends solely on the end time of the current task and start time of the next task. If unvisited tasks are stored in increasing order of start time, one can efficiently find the closest unvisited task. To this end, we store the univisited tasks $\mathcal{U}$ in a self-balancing binary search tree, allowing us to maintain a fixed task order even as tasks are visited and removed throughout the course of the algorithm. Constructing this tree takes $\mathcal{O}(n \log n)$. Finding the next task $v$ and removing it from $\mathcal{U}$ both take $\mathcal{O}(\log n)$. Since $\mathcal{O}(n)$ operations are required until all tasks are visited, the total time complexity is $\mathcal{O}(n \log n)$.

\textsc{Nearest Neighbor} always returns a fair periodic assignment. It remains to show that this assignment requires at most $L + 1$ workers. Suppose, to the contrary, that the assignment requires more than $L + 1$ workers. Let $t=0$ correspond to the start of the first task that is visited. If the assignment requires more than $L+1$ workers, there is some task $i=(a,b)$ that is started after $L$ periods, i.e., after $L$ complete revolutions around the circle. This means that in the first $L$ periods, a task is performed at time $a$. Together with task $i$, this implies the existence of some small $\varepsilon>0$ such that $\mathcal{I}(a+\varepsilon) = L + 1$, a contradiction.
\end{proof}

The performance guarantee of \textsc{Nearest Neighbor} allows us to upper-bound the number of additional workers required to operate a fair assignment, i.e., the price of fairness.
\begin{corollary}
\label{col:pof}
The price of fairness is $1/L$.
\end{corollary}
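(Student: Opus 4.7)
The plan is to derive Corollary~\ref{col:pof} as an immediate consequence of the two theorems already established. First I would invoke Theorem~\ref{thm:pap}, specifically the characterization in statement (iii), to recall that every optimal PAP solution uses exactly $L$ workers. Second, I would apply the preceding Nearest Neighbor theorem, which guarantees the existence of a fair assignment using at most $L+1$ workers. Adopting the standard definition of the price of fairness as the relative increase in the required number of workers, i.e.\ $(w^{\text{FPAP}}_{\text{opt}} - w^{\text{PAP}}_{\text{opt}})/w^{\text{PAP}}_{\text{opt}}$, the upper bound evaluates to $(L+1-L)/L = 1/L$.

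For tightness, I would point to the instance already displayed in Figure~\ref{fig:solutions_fair}. That instance has load $L=2$ and admits no fair assignment with fewer than three workers, so its price of fairness is exactly $1/2$, matching the upper bound at $L=2$. This establishes that the bound $1/L$ cannot be improved in general.

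I do not expect any substantive obstacle: the corollary is essentially a one-line arithmetic consequence of the two theorems, and all intellectual content lives in the Nearest Neighbor analysis above. The only minor point worth pinning down is the precise formal definition of the price of fairness, after which the result follows by direct substitution. If one wished to argue tightness uniformly in $L$ rather than merely exhibit a single matching instance at $L=2$, one would need to generalize Figure~\ref{fig:solutions_fair} to a family of instances of arbitrary load; this does not appear to be part of the corollary's scope, so I would not pursue it here.
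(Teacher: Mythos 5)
Your proposal is correct and follows essentially the same route as the paper: optimal PAP uses exactly $L$ workers, \textsc{Nearest Neighbor} gives a fair solution with at most $L+1$, and tightness is witnessed by the instance of \autoref{fig:solutions_fair}, yielding $\frac{(L+1)-L}{L} = \frac{1}{L}$. Your side remark that a single instance at $L=2$ does not establish tightness uniformly in $L$ is a fair observation that the paper itself glosses over, but it does not change the argument.
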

\begin{proof}
An efficient assignment always requires exactly $L$ workers. \textsc{Nearest Neighbor} returns a fair solution with at most $L + 1$ workers. This upper bound is tight, see, e.g., \autoref{fig:solutions_fair}. It follows that the price of fairness equals $\frac{(L + 1) - L}{L} = \frac{1}{L}$.
\end{proof}

\subsection{Theory}
\label{subsec:fair_theory}

We now aim to characterize the instances that admit a solution that is both fair and \textit{efficient}, i.e., require no more than $L$ workers. Remark that, in any efficient solution, workers are only idle during periods in which the load is not maximal, i.e., for which $\mathcal{I}(t) < L$. We refer to such intervals as \textit{idle intervals}. Since a worker is idle right before and after performing a task, tasks act as connections between idle intervals. Moreover, the transition arcs in any efficient solution are fully contained in idle intervals. As transition arcs determine which tasks are performed consecutively, they implicitly define which idle intervals are visited along each (sub)tour. 

In case a solution consists of multiple subtours, we distinguish two cases. If two disjoint cycles share transition arcs in the same idle interval, exchanging the end tasks between two transition arcs in this interval may \textit{patch} the cycles together, reducing the number of subtours by one without increasing the transition costs and bringing us closer to a fair solution. If they are not simultaneously idle, however, such a procedure is impossible and a fair solution with $L$ workers is out of reach. In this case, the disjoint cycles can only be patched together by transition arcs that are active outside of idle intervals, implying the use of at least one additional worker compared to an efficient solution. The connectivity of idle intervals thus contains crucial information regarding the existence of a fair solution with $L$ workers. 

In the remainder of this section, we show that the way in which tasks connect idle intervals provides a necessary and sufficient condition for the existence of solutions that are both efficient and fair. In particular, we show that such solutions can always be constructed from efficient solutions through patching.

\newcommand{\arc}[4]{
    \pgfmathsetmacro{\innerR}{#3-0.1}
    \pgfmathsetmacro{\outerR}{#3+0.1}
    \pgfmathsetmacro{\middle}{0.1 * #1 + 0.9 * #2}
    \draw[black] ({#3*cos(#1)}, {#3*sin(#1)}) arc (#1:#2:#3);
    \draw[black] ({\innerR *cos(#1)}, {\innerR * sin(#1)}) -- ({\outerR*cos(#1)}, {\outerR*sin(#1)});
    \draw[black] ({\innerR *cos(#2)}, {\innerR*sin(#2)}) -- ({\outerR*cos(#2)}, {\outerR*sin(#2)});
    \node at ({(\outerR + 0.11) *cos(\middle)}, {(\outerR + 0.11) *sin(\middle)}) {#4};
}

\tikzset{->-/.style={decoration={
			markings,
			mark=at position #1 with {\arrow{>}}},postaction={decorate}}}

\newcommand{\slack}[3]{
    \draw[thick] (#1, #3) -- (#2, #3);
    \draw[fill=black] (#1, #3) circle (2pt);
    \draw[fill=black] (#2, #3) circle (2pt);
}

\newcommand{\zeroslack}[2]{
    \draw[thick] (#1, 0) -- (#2, 0);
    \draw[fill=white] (#1, 0) circle (2pt);
    \draw[fill=white] (#2, 0) circle (2pt);
}

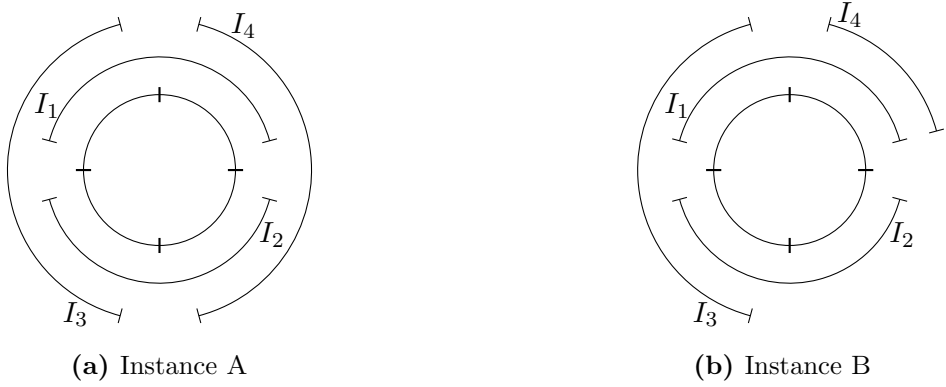
\begin{figure}[htbp!]
\begin{subfigure}[b]{0.45\textwidth}
\centering
\begin{tikzpicture}
    \draw[black] (0, 0) circle(1);

    \arc{15}{165}{1.5}{$I_1$};
    \arc{195}{345}{1.5}{$I_2$};
    \arc{105}{255}{2}{$I_3$};
    \arc{285}{435}{2}{$I_4$};

    \foreach \angle/\label in {
    90/$0$,
    0/{$\frac{T}{4}$},
    270/{$\frac{T}{2}$},
    180/{$\frac{3T}{4}$}
    } 
    {
    \draw[thick] ({0.9*cos(\angle)}, {0.9*sin(\angle)}) -- ({1.1*cos(\angle)}, {1.1*sin(\angle)});
    }
\end{tikzpicture}
\caption{Instance \textsc{A}}
\label{subfig:instance_unfair}
\end{subfigure}
\hfill
\begin{subfigure}[b]{0.45\textwidth}
\centering
\begin{tikzpicture}
    \draw[black] (0, 0) circle(1);

    \arc{15}{165}{1.5}{$I_1$};
    \arc{195}{345}{1.5}{$I_2$};
    \arc{105}{255}{2}{$I_3$};
    \arc{375}{435}{2}{$I_4$};

    \foreach \angle/\label in {
    90/$0$,
    0/{$\frac{T}{4}$},
    270/{$\frac{T}{2}$},
    180/{$\frac{3T}{4}$}
    } 
    {
    \draw[thick] ({0.9*cos(\angle)}, {0.9*sin(\angle)}) -- ({1.1*cos(\angle)}, {1.1*sin(\angle)});
    }
\end{tikzpicture}
\caption{Instance \textsc{B}}
\label{subfig:instance_fair}
\end{subfigure}
\caption{Two instances of the periodic assignment problem. Instance \textsc{A} does not admit a fair and efficient solution, while instance \textsc{B} does.}
\label{fig:fair_instances}
\end{figure}

Throughout, we illustrate our analysis on the instances in \autoref{fig:fair_instances}. The instance in \autoref{subfig:instance_unfair} equals that of \autoref{fig:solutions_fair} and does not admit a fair solution with $L=2$ workers. In contrast, the slightly modified instance in \autoref{subfig:instance_fair} does admit a fair and efficient solution. 

We proceed by formalizing the notion of idle interval.
\begin{definition} \label{def:idle_interval}
An idle interval $k$ is a maximal $T$-periodic closed interval $[s_k, e_k]$ satisfying $L - \mathcal{I}(t) > 0$ for all $t \in [s_k, e_k]$.
\end{definition}
An interval is maximal when it is not contained in another interval. It is easy to see that each instance admits at most $n$ idle intervals. Moreover, the start and end time of each interval correspond to the end and start time of some tasks, respectively. The idle intervals of instances \textsc{A} and \textsc{B} are shown in Figures~\ref{subfig:idle_unfair} and~\ref{subfig:idle_fair}, respectively. Observe that the number of idle workers $L - \mathcal{I}(t)$ need not be constant within an idle interval.

\newcommand{\arc}[4]{
    \pgfmathsetmacro{\innerR}{#3-0.1}
    \pgfmathsetmacro{\outerR}{#3+0.1}
    \pgfmathsetmacro{\middle}{0.1 * #1 + 0.9 * #2}
    \draw[black] ({#3*cos(#1)}, {#3*sin(#1)}) arc (#1:#2:#3);
    \draw[black] ({\innerR *cos(#1)}, {\innerR * sin(#1)}) -- ({\outerR*cos(#1)}, {\outerR*sin(#1)});
    \draw[black] ({\innerR *cos(#2)}, {\innerR*sin(#2)}) -- ({\outerR*cos(#2)}, {\outerR*sin(#2)});
    \node at ({(\outerR + 0.11) *cos(\middle)}, {(\outerR + 0.11) *sin(\middle)}) {#4};
}

\tikzset{->-/.style={decoration={
			markings,
			mark=at position #1 with {\arrow{>}}},postaction={decorate}}}

\newcommand{\slack}[3]{
    \draw[thick] (#1, #3) -- (#2, #3);
    \draw[fill=black] (#1, #3) circle (2pt);
    \draw[fill=black] (#2, #3) circle (2pt);
}

\newcommand{\zeroslack}[2]{
    \draw[thick] (#1, 0) -- (#2, 0);
    \draw[fill=white] (#1, 0) circle (2pt);
    \draw[fill=white] (#2, 0) circle (2pt);
}

\begin{figure}[htbp!]
\begin{subfigure}[b]{0.45\textwidth}
\centering
\resizebox{\linewidth}{!}{
\begin{tikzpicture}
    
    \draw[->, thin] (0,0)--(8.33,0);
    \draw[->, thin] (0,0)--(0, 2.5) node[above]{$L - \mathcal{I}(t)$};

    \draw[thin] (0, -0.1) node[below]{$0$}--(0, 0.1);
    \draw[thin] (2, -0.1) node[below]{$\frac{T}{4}$}--(2, 0.1);
    \draw[thin] (4, -0.1) node[below]{$\frac{T}{2}$}--(4, 0.1);
    \draw[thin] (6, -0.1) node[below]{$\frac{3T}{4}$}--(6, 0.1);
    \draw[thin] (8, -0.1) node[below]{$T$}--(8, 0.1);

    \draw[thin] (-0.1, 0) node[left]{$0$}--(0.1, 0);
    \draw[thin] (-0.1, 1) node[left]{$1$}--(0.1, 1);
    \draw[thin] (-0.1, 2) node[left]{$2$}--(0.1, 2);

    \draw[dotted, thin] (0.33, 0) -- (0.33, 1);
    \draw[dotted, thin] (1.67, 0) -- (1.67, 1);
    \draw[dotted, thin] (2.33, 0) -- (2.33, 1);
    \draw[dotted, thin] (3.67, 0) -- (3.67, 1);
    \draw[dotted, thin] (4.33, 0) -- (4.33, 1);
    \draw[dotted, thin] (5.67, 0) -- (5.67, 1);
    \draw[dotted, thin] (6.33, 0) -- (6.33, 1);
    \draw[dotted, thin] (7.67, 0) -- (7.67, 1);
    
    \draw[thick] (0, 1) -- (0.33, 1);
    \draw[fill=black] (0.33, 1) circle (2pt);
    \zeroslack{0.33}{1.67};
    \slack{1.67}{2.33}{1};
    \zeroslack{2.33}{3.67};
    \slack{3.67}{4.33}{1};
    \zeroslack{4.33}{5.67};
    \slack{5.67}{6.33}{1};
    \zeroslack{6.33}{7.67};
    \draw[thick] (7.67, 1) -- (8.33, 1);
    \draw[fill=black] (7.67, 1) circle (2pt);

    \draw[<->] (1.67, 1.5) -- node[midway, above] {$S_1$} (2.33, 1.5);
    \draw[<->] (3.67, 1.5) -- node[midway, above] {$S_2$} (4.33, 1.5);
    \draw[<->] (5.67, 1.5) -- node[midway, above] {$S_3$} (6.33, 1.5);
    \draw[<-] (7.67, 1.5) -- node[midway, above] {$S_4$} (8.33, 1.5);
    \draw[->] (0, 1.5) -- (0.33, 1.5);
    
\end{tikzpicture}
}
\caption{Idle time function of \textsc{A}}
\label{subfig:idle_unfair}
\end{subfigure}
\hfill
\begin{subfigure}[b]{0.45\textwidth}
\centering
\resizebox{\linewidth}{!}{
\begin{tikzpicture}
    
    \draw[->, thin] (0,0)--(8.33,0);
    \draw[->, thin] (0,0)--(0, 2.5) node[above]{$L - \mathcal{I}(t)$};

    \draw[thin] (0, -0.1) node[below]{$0$}--(0, 0.1);
    \draw[thin] (2, -0.1) node[below]{$\frac{T}{4}$}--(2, 0.1);
    \draw[thin] (4, -0.1) node[below]{$\frac{T}{2}$}--(4, 0.1);
    \draw[thin] (6, -0.1) node[below]{$\frac{3T}{4}$}--(6, 0.1);
    \draw[thin] (8, -0.1) node[below]{$T$}--(8, 0.1);

    \draw[thin] (-0.1, 0) node[left]{$0$}--(0.1, 0);
    \draw[thin] (-0.1, 1) node[left]{$1$}--(0.1, 1);
    \draw[thin] (-0.1, 2) node[left]{$2$}--(0.1, 2);

    \draw[dotted, thin] (0.33, 0) -- (0.33, 1);
    \draw[dotted, thin] (1.67, 0) -- (1.67, 2);
    \draw[dotted, thin] (2.33, 1) -- (2.33, 2);
    \draw[dotted, thin] (4.33, 0) -- (4.33, 1);
    \draw[dotted, thin] (5.67, 0) -- (5.67, 1);
    \draw[dotted, thin] (6.33, 0) -- (6.33, 1);
    \draw[dotted, thin] (7.67, 0) -- (7.67, 1);
    
    \draw[thick] (0, 1) -- (0.33, 1);
    \draw[fill=black] (0.33, 1) circle (2pt);
    \zeroslack{0.33}{1.67};
    \slack{1.67}{2.33}{2};
    \slack{2.33}{4.33}{1};
    \draw[fill=white] (2.33, 1) circle (2pt);
    \zeroslack{4.33}{5.67};
    \slack{5.67}{6.33}{1};
    \zeroslack{6.33}{7.67};
    \draw[thick] (7.67, 1) -- (8.33, 1);
    \draw[fill=black] (7.67, 1) circle (2pt);

    \draw[<->] (1.67, 2.5) -- node[midway, above] {$S_1$} (4.33, 2.5);
    \draw[<->] (5.67, 1.5) -- node[midway, above] {$S_2$} (6.33, 1.5);
    \draw[<-] (7.67, 1.5) -- node[midway, above] {$S_3$} (8.33, 1.5);
    \draw[->] (0, 1.5) -- (0.33, 1.5);

\end{tikzpicture}
}
\caption{Idle time function of \textsc{B}}
\label{subfig:idle_fair}
\end{subfigure}
\begin{subfigure}[b]{0.45\textwidth}
\centering
\begin{tikzpicture}
    \node[draw=black, shape=circle] at (0, 2) (1) {$S_1$};
    \node[draw=black, shape=circle] at (3, 2) (2) {$S_2$};
    \node[draw=black, shape=circle] at (0, 0) (3) {$S_3$};
    \node[draw=black, shape=circle] at (3, 0) (4) {$S_4$};
    
    \draw[->, bend right] (1) to node[midway, left] {$I_2$} (3);
    \draw[->, bend right] (3) to node[midway, right] {$I_1$} (1);
    \draw[->, bend right] (2) to node[midway, left] {$I_3$} (4);
    \draw[->, bend right] (4) to node[midway, right] {$I_4$} (2);
\end{tikzpicture}
\caption{Idle interval graph of \textsc{A}}
\label{subfig:connectivity_unfair}
\end{subfigure}
\hfill
\begin{subfigure}[b]{0.45\textwidth}
\centering
\begin{tikzpicture}
    \node[draw=black, shape=circle] at (0, 2) (1) {$S_1$};
    \node[draw=black, shape=circle] at (3, 2) (2) {$S_2$};
    \node[draw=black, shape=circle] at (0, 0) (3) {$S_3$};
    
    \draw[->, bend right] (2) to node[midway, above] {$I_1$} (1);
    \draw[->, bend right] (1) to node[midway, below] {$I_2$} (2);
    \draw[->, bend right] (1) to node[midway, left] {$I_3$} (3);
    \draw[->, bend right] (3) to node[midway, right] {$I_4$} (1);
\end{tikzpicture}
\caption{Idle interval graph of \textsc{B}}
\label{subfig:connectivity_fair}
\end{subfigure}
\caption{Idle time function and idle interval graph of instance \textsc{A} and \textsc{B}.}
\label{fig:fair_idle}
\end{figure}
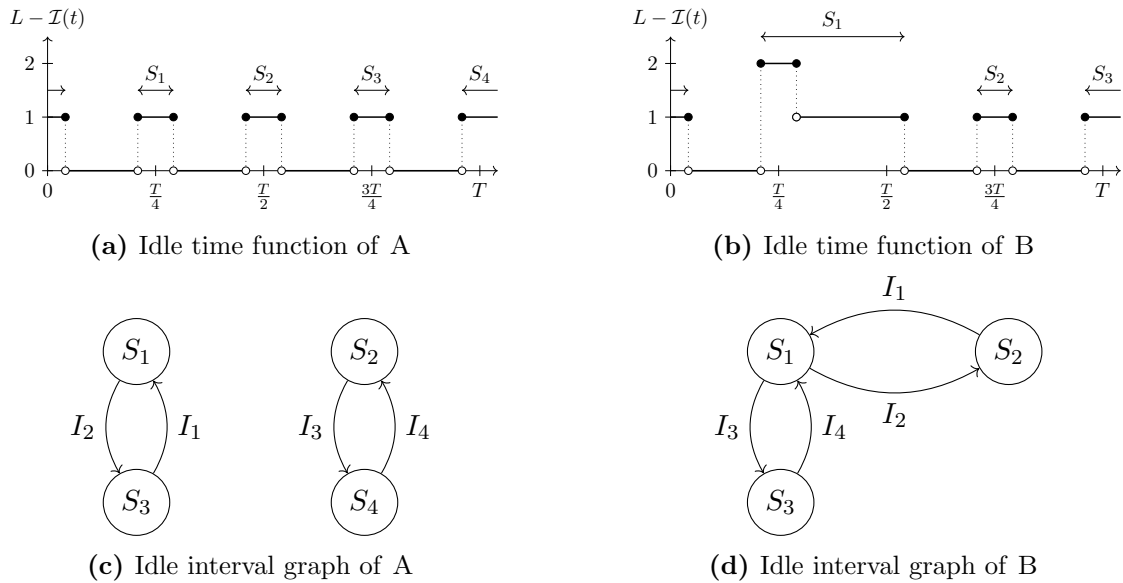

By optimality, efficient assignments do not make use of transition arcs outside idle intervals. 
\begin{lemma}
In any optimal periodic assignment, every transition arc is contained in an idle interval.
\end{lemma}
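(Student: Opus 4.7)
The plan is to deduce the result directly from Theorem~\ref{thm:pap}, which characterizes optimal assignments as those in which the pointwise sum $\mathcal{I}(t)+\mathcal{M}^*(t)$ equals $L$ everywhere. The key observation is that whenever a transition arc is active at time $t$, it contributes at least one unit to $\mathcal{M}^*(t)$, which forces $\mathcal{I}(t)$ to be strictly below $L$ at that moment. Since the definition of an idle interval is precisely a maximal closed interval where $L-\mathcal{I}(t)>0$, every time covered by an optimal transition arc must belong to some idle interval.

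Concretely, I would proceed as follows. Take any optimal $\mathcal{M}^*$ and any arc $(i,j)\in\mathcal{M}^*$, and recall that this arc is associated with the $T$-periodic closed interval $[b_i,a_j]$. By construction, for every $t\in[b_i,a_j]$ we have $\mathcal{M}^*(t)\ge 1$. Applying condition $(iii)$ of Theorem~\ref{thm:pap}, it follows that $\mathcal{I}(t)=L-\mathcal{M}^*(t)\le L-1$, hence $L-\mathcal{I}(t)>0$ on the entire interval $[b_i,a_j]$.

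It then suffices to argue that $[b_i,a_j]$ lies inside a single idle interval. Because $\mathcal{I}$ is a step function that changes only at the finitely many task endpoints, the set $\{t\in[0,T):L-\mathcal{I}(t)>0\}$ decomposes into finitely many maximal $T$-periodic closed intervals, which are exactly the idle intervals from Definition~\ref{def:idle_interval}. Since $[b_i,a_j]$ is itself a connected periodic interval contained in this set, it must lie entirely within one of these maximal components, i.e., within some idle interval $[s_k,e_k]$.

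The argument is essentially a one-line consequence of Theorem~\ref{thm:pap}, so I do not anticipate a genuine obstacle. The only point requiring mild care is the handling of the closed/open conventions: tasks occupy open intervals while transitions occupy closed ones, but this asymmetry is harmless here because the contribution of the arc $(i,j)$ to $\mathcal{M}^*(t)$ on all of $[b_i,a_j]$ is what drives $\mathcal{I}(t)$ strictly below $L$ on that same closed interval, matching the closed form of idle intervals in Definition~\ref{def:idle_interval}.
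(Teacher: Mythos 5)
Your proof is correct and follows essentially the same route as the paper: both arguments reduce to condition $(iii)$ of Theorem~\ref{thm:pap}, the paper phrasing it as a contradiction ($\mathcal{I}(t)+\mathcal{M}^*(t)\ge L+1$ at a non-idle time covered by the arc) and you phrasing it directly ($\mathcal{M}^*(t)\ge1$ forces $\mathcal{I}(t)<L$ along the whole arc). Your extra step showing that the connected interval $[b_i,a_j]$ must then sit inside a single maximal component of $\{t: \mathcal{I}(t)<L\}$ is a detail the paper leaves implicit, but it is correct and does not change the approach.
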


\begin{proof}
Denote the optimal assignment by $\mathcal{M}^*$ and suppose, to the contrary, that transition arc $(i, j) \in \mathcal{M}^*$ is not contained in any idle interval. Choose any $t$ in the $T$-periodic interval $[b_i, a_j]$. Since $(i, j)$ is not contained in any idle interval, it follows from \autoref{def:idle_interval} that $\mathcal{I}(t) = L$. As arc $(i, j)$ is also active at time $t$, it holds that $\mathcal{I}(t) + \mathcal{M}^*(t) \geq L + 1$. By statement $(iii)$ of \autoref{thm:pap}, this contradicts optimality of $\mathcal{M}^*$.
\end{proof}

Our goal is to study how the connectivity of idle intervals determines the number of disjoint cycles in a periodic assignment. To this end, we introduce the idle interval graph, representing how the various tasks connect different idle intervals.
\begin{definition} \label{def:idle_graph}
Let $\mathcal{S}$ denote the set of idle intervals. The idle interval graph is the directed multigraph $\mathcal{H} = (\mathcal{S}, \mathcal{B})$ that contains an arc $(k, l)$ for every task $i \in \mathcal{I}$ satisfying $a_i \in [s_k, e_k]$ and $b_i \in [s_l, e_l]$.
\end{definition}
To see that this is well-defined, note that $\mathcal{I}(a_i), \mathcal{I}(b_i) < L$ for all $i \in \mathcal{I}$. In other words, for each task there are unique idle intervals at its start and end time, respectively. By construction, the number of arcs $|\mathcal{B}|$ is always equal to exactly $n$. Moreover, any feasible periodic assignment covers all tasks once, and thereby implicitly includes all arcs of $\mathcal{H}$.

Figures~\ref{subfig:connectivity_unfair} and~\ref{subfig:connectivity_fair} illustrate the idle interval graphs of instances \textsc{A} and \textsc{B}, respectively. The number of idle intervals, and hence the number of nodes in the graph, differs across the two instances. Moreover, we find that the idle interval graph of instance \textsc{B} is connected, while that of instance \textsc{A} is not. 

It turns out that efficient solutions on instances with a connected idle interval graph always satisfy the following condition: if they are not fair, they contain overlapping transition arcs belonging to disjoint cycles. 

\begin{lemma} \label{lem:overlap}
If the idle interval graph is weakly connected, and an optimal periodic assignment $\mathcal{M}^*$ consists of disjoint cycles, then there exists a pair of overlapping transition arcs belonging to disjoint cycles.
\end{lemma}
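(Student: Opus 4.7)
The plan is to argue by contradiction: assume that no two disjoint cycles of $\mathcal{M}^*$ contain a pair of overlapping transition arcs, and derive a contradiction with the weak connectivity of $\mathcal{H}$. The key reduction is that each cycle of $\mathcal{M}^*$ traces a closed walk in $\mathcal{H}$: by the preceding lemma every optimal transition arc lies inside some idle interval, so whenever two tasks $i$ and $j$ are consecutive in a cycle, $b_i$ and $a_j$ fall in the same idle interval, meaning the $\mathcal{H}$-arcs associated with $i$ and $j$ meet at a common node. Hence the cycles of $\mathcal{M}^*$ partition the $n$ arcs of $\mathcal{H}$ into closed walks, and if no two such walks shared a node their node-sets would form pairwise disjoint components of $\mathcal{H}$, contradicting weak connectivity. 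Consequently, some idle interval $k$ is visited by two distinct cycles $\mathcal{C}_1$ and $\mathcal{C}_2$, each contributing at least one transition arc inside $[s_k, e_k]$.

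The next step is to exhibit an overlapping pair inside this shared interval. By statement $(iii)$ of \autoref{thm:pap}, $\mathcal{M}^*(t) = L - \mathcal{I}(t) \geq 1$ for every $t \in [s_k, e_k]$, so at least one transition arc is active at every instant in $k$. Let $S \subseteq [s_k,e_k]$ be the set of times at which some $\mathcal{C}_1$-arc is active; as transition arcs are closed intervals, $S$ is a non-empty closed subset of $[s_k,e_k]$. If $S = [s_k,e_k]$, then every point in the interval of a $\mathcal{C}_2$-arc inside $k$ also lies in $S$, witnessing simultaneous activity of a $\mathcal{C}_1$-arc and a $\mathcal{C}_2$-arc, hence an overlap. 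Otherwise, pick a boundary point $t^\ast \in S$ with points outside $S$ arbitrarily close. At such points $\mathcal{M}^*(t) \geq 1$ forces some arc of a cycle $\mathcal{C}_b \neq \mathcal{C}_1$ to be active; since only finitely many transition arcs exist, a pigeonhole argument yields one fixed $\mathcal{C}_b$-arc whose closed interval contains $t^\ast$. Because $t^\ast \in S$, a $\mathcal{C}_1$-arc is also active at $t^\ast$, so these two arcs overlap at $t^\ast$, providing the required pair from disjoint cycles.

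The step I expect to require the most care is this boundary analysis: ruling out that cycles could cleanly hand off inside $k$ depends critically on the paper's closed-interval convention for transition arcs, so that two arcs meeting only at an endpoint still qualify as overlapping; without it, one could imagine pathological alternations of active cycles that never produce strict overlap. A secondary point worth verifying is the claim in the first step that the closed walks jointly cover every arc of $\mathcal{H}$, which follows because each task appears in exactly one cycle of $\mathcal{M}^*$ and corresponds to exactly one arc of $\mathcal{H}$.
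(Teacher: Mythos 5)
Your proof is correct and follows essentially the same two-stage strategy as the paper: first use weak connectivity of $\mathcal{H}$ to force two disjoint cycles to share an idle interval, then use the fact that $\mathcal{M}^*(t)\geq 1$ throughout that interval (statement $(iii)$ of \autoref{thm:pap}) to force an overlap, with the closed-interval convention making endpoint contact count. Your boundary-point/pigeonhole formulation of the second stage is a somewhat more careful rendering of the paper's ``earliest start time'' case analysis, but it is the same underlying gap-filling argument.
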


\begin{proof}
We first show that at least two disjoint cycles must contain transition arcs traversing the same idle interval. To the contrary, suppose that no two cycles share arcs in an idle interval. Take any cycle $\mathcal{C}_{1} \in \mathcal{M}^*$, and denote by $\mathcal{S}_1 \subseteq \mathcal{S}$ all the idle intervals traversed along this cycle. By assumption, all the cycles in $\mathcal{M}' = \mathcal{M}^* \setminus \mathcal{C}_{1}$ only traverse idle intervals in $\mathcal{S}' = \mathcal{S} \setminus \mathcal{S}_1$. Clearly, $\mathcal{S}'$ is nonempty. Let $\mathcal{B}' \subseteq \mathcal{B}$ be the arcs connecting $\mathcal{S}'$ and $\mathcal{S}_1$. By connectivity of $\mathcal{H}$, this set is nonempty. Consider any arc $(k, l) \in \mathcal{B}'$. Recall that this arc is induced by some task $i \in \mathcal{I}$. Without loss of generality, assume that $k \in \mathcal{S}_1$ and $l \in \mathcal{S}'$. If task $i$ is covered by $\mathcal{C}_1$, this cycle would contain a transition arc in $l$, a contradiction. Similarly, if task $i$ is covered by some cycle in $\mathcal{M}'$, it would contain a transition arc in $k$, another contradiction. Since the task must be covered on exactly one cycle, we conclude that there must exist two disjoint cycles containing transition arcs in the same idle interval. 

Now assume that two disjoint cycles contain transition arcs in the same idle interval $[s, e] \in \mathcal{S}$. We show that their transition arcs must overlap at some time in $[s, e]$.

Let $\mathcal{C}_{1}$ be the cycle whose earliest transition arc starts at time $s$. By definition of the idle interval and optimality of $\mathcal{M}^*$, such a cycle exists. We distinguish two cases. First, suppose that the transition arcs in $\mathcal{C}_1$ are continuously active from time $s$ until time $e$. Let $\mathcal{C}_2$ be any other disjoint cycle active in the same idle interval. Clearly, any transition arc from $\mathcal{C}_2$ will overlap with some arc of $\mathcal{C}_1$. Alternatively, suppose that the transition arcs in $\mathcal{C}_1$ are active from time $s$ until some time $t_1 < e$, and potentially later in the idle interval too. Let $\mathcal{C}_2$ be the disjoint cycle passing through the same idle interval whose earliest transition arc has the second-earliest start time. Denote this start time by $t_2$. If $t_2 \leq t_1$, the corresponding transition arc must intersect with some arc in $\mathcal{C}_1$. If $t_2 > t_1$, no transition arcs are active in the interval $(t_1, t_2)$. This contradicts the definition of the idle interval and optimality of $\mathcal{M}^*$, stating that $\mathcal{M}^*(t) \geq 1$ for all $t \in [s, e]$. Hence, at least two transition arcs from disjoint cycles must overlap.
\end{proof}

As outlined before, these overlapping arcs provide opportunities for patching. Indeed, we show that patching can always be used to obtain a fair and efficient solution on instances satisfying the conditions of \autoref{lem:overlap}.

\begin{theorem} \label{thm:fair_instance}
An instance admits a fair solution with $L$ workers if and only if the idle interval graph is weakly connected. 
\end{theorem}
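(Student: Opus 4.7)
My plan is to prove the two directions separately. For the forward direction ($\Rightarrow$), I would start from a fair solution $\mathcal{M}^*$ with $L$ workers and observe, via the preceding lemma, that every transition arc of $\mathcal{M}^*$ is contained in some idle interval. Because $\mathcal{M}^*$ is a Hamiltonian cycle, traversing it in order induces a closed walk in the idle interval graph $\mathcal{H}$: each task corresponds to an arc of $\mathcal{H}$, while each transition arc does not change the current node of $\mathcal{H}$ (it stays within a single idle interval). Since every task is visited exactly once, this walk uses each arc of $\mathcal{H}$ exactly once and is therefore an Eulerian circuit. Every node of $\mathcal{H}$ has at least one incident arc (the endpoints of an idle interval are themselves task endpoints), so the existence of an Eulerian circuit forces the underlying undirected graph to be connected, i.e., $\mathcal{H}$ is weakly connected.

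For the converse ($\Leftarrow$), I would run an iterative patching argument starting from any efficient periodic assignment $\mathcal{M}^*$ (known to exist by Section~\ref{sec:periodic}). If $\mathcal{M}^*$ is a single Hamiltonian cycle, there is nothing to do. Otherwise, I invoke \autoref{lem:overlap}, which applies because $\mathcal{H}$ is weakly connected, to extract two overlapping transition arcs $(i_1, j_1)$ and $(i_2, j_2)$ from distinct subtours $\mathcal{C}_1$ and $\mathcal{C}_2$. I then replace these arcs by $(i_1, j_2)$ and $(i_2, j_1)$. Structurally, this swap merges $\mathcal{C}_1$ and $\mathcal{C}_2$ into a single cycle, strictly reducing the number of subtours by one. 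I next claim that the patched assignment $\mathcal{M}'$ remains optimal: since all four endpoints lie in the same idle interval, the transition costs are plain differences without modulo wraparound, and a brief case check on the cyclic order of $b_{i_1}, b_{i_2}, a_{j_1}, a_{j_2}$ shows that $c_{i_1 j_2} + c_{i_2 j_1} = c_{i_1 j_1} + c_{i_2 j_2}$; hence $\mathcal{M}'$ has the same total transition time as $\mathcal{M}^*$ and still uses $L$ workers by condition $(ii)$ of \autoref{thm:pap}. Iterating at most $n-1$ times yields a Hamiltonian cycle with $L$ workers. A small but crucial observation is that $\mathcal{H}$ depends only on the instance, so its weak connectivity persists across iterations and \autoref{lem:overlap} remains applicable throughout.

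The main obstacle I anticipate is the cost-preservation step inside the patch: I need to confirm that the four endpoints really do sit in one idle interval so that the $[\,\cdot\,]_T$ in the definition of $c_{ij}$ collapses to ordinary subtraction. \autoref{lem:overlap} delivers exactly this (the two overlapping arcs share an idle interval), so the remaining work is routine order-based bookkeeping of the four endpoints. The forward direction is conceptually the cleaner half, and the whole proof pivots on the task-to-arc correspondence that converts the Hamiltonian cycle in $\mathcal{G}$ into an Eulerian circuit in $\mathcal{H}$.
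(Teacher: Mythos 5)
Your proposal is correct and follows essentially the same route as the paper: the forward direction reads the Hamiltonian cycle as covering all nodes and arcs of $\mathcal{H}$ (your Eulerian-circuit framing, using the lemma that transition arcs of an optimal solution lie inside idle intervals, just makes this precise), and the converse is the same iterative patching of overlapping arcs from disjoint cycles via \autoref{lem:overlap}. Your explicit verification that $c_{i_1 j_2} + c_{i_2 j_1} = c_{i_1 j_1} + c_{i_2 j_2}$ for overlapping arcs is a detail the paper asserts without proof, and it checks out.
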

\begin{proof}
First, let $\mathcal{M}^*$ be any fair solution with $L$ workers. This single cycle visits all nodes (idle intervals) and arcs (tasks) of the idle interval graph. Clearly, the idle interval graph must be connected.

Now suppose that the idle interval graph is connected, and let $\mathcal{M}^*$ be any solution with $L$ workers. If it consists of a single cycle, we are done. Assume that it consists of at least two disjoint cycles. By \autoref{lem:overlap}, there exists a pair of overlapping transition arcs $(i_1, j_1), (i_2, j_2)$ belonging to disjoint cycles. We can reduce the number of disjoint cycles by one through a patching operation. In particular, we replace the original, overlapping transition arcs by $(i_1, j_2)$ and $(i_2, j_1)$. The total transition time of the assignment, and hence the number of required workers, is unaffected by this operation. The number of disjoint cycles, however, decreases by one. As the original number of disjoint cycles is at most $n$, repeating this procedure at most $n-1$ times is guaranteed to return a fair solution with $L$ workers. 
\end{proof}

To illustrate the patching idea, we return to the idle interval graphs of instances \textsc{A} and \textsc{B} in Figures~\ref{subfig:connectivity_unfair} and~\ref{subfig:connectivity_fair}, respectively. In line with \autoref{thm:fair_instance}, we find that the graph of instance \textsc{A}, for which no fair solution with $L=2$ workers exists, is not connected. In contrast, the graph of instance \textsc{B} is connected, showing that this instance admits a fair solution with two workers. It does not imply, however, that every efficient solution to this instance is fair. For example, the solution consisting of two disjoint cycles $(I_1 \rightarrow I_2 \rightarrow I_1)$ and $(I_3 \rightarrow I_4 \rightarrow I_3)$ is efficient but not fair. However, these two disjoint cycles can be patched to form a single fair solution with two workers. For example, transition arcs $(I_1, I_2)$ and $(I_4, I_3)$ overlap at time $t = \frac{T}{4}$. Patching these two arcs yields the fair and efficient solution $(I_1 \rightarrow I_3 \rightarrow I_4 \rightarrow I_2 \rightarrow I_1)$.

We conclude our theoretical analysis by pointing out that \autoref{thm:fair_instance} provides an elegant, alternative way of arriving at the price of fairness in \autoref{col:pof}. In particular, we can view the price of fairness as the minimum increase in the load required to make the idle interval graph connected. Consider an instance for which the idle interval graph is unconnected. It can easily be made connected by adding a series of artificial tasks that span the period exactly once and start and end at the start and end times of consecutive intervals, respectively. The resulting instance has load $L + 1$, and always admits a fair assignment using $L + 1$ workers.

\subsection{Patching}
\label{subsec:fair_algo} 

The proof of \autoref{thm:fair_instance} shows that, on instances admitting a fair and efficient solution, any efficient solution containing disjoint cycles can be made fair by a finite number of patching operations. We now show how to efficiently implement such a patching procedure and obtain a $\mathcal{O}(n \log n)$ exact algorithm for FPAP, which we call \textsc{Patching}.

\begin{algorithm}
\caption{\textsc{Patching}}
\label{alg:patching}
\begin{algorithmic}[1]
\STATE Compute efficient assignment $\mathcal{M} \leftarrow$ \textsc{Shift-Sort-and-Match} 
\STATE Compute number of disjoint cycles $C$ in $\mathcal{M}$
\STATE Initialize array $U$, where entry $U[i]$ stores original cycle index of task $i$
\STATE Initialize array $V$ of size $C$, where $V[k]$ stores index of cycle that cycle with original index $k$ has been patched to. Initially, $V[k] = k$ for all $k \in [C]$
\STATE Construct array of transition arcs $R$, sorted by increasing start time
\STATE Initialize patching arc $(i, j) \leftarrow R[1]$
\FOR{$m = 2, \dots, |R|$}
    \STATE Retrieve current arc $(k ,l) \leftarrow R[m]$
    \IF{$b_k > a_j$}
         \STATE Update patching arc $(i, j) \leftarrow (k, l)$
    \ELSIF{$V[U[k]] \neq V[U[i]]$}
        \STATE Perform patching by replacing arcs $(i,j), (k, l)$ in $\mathcal{M}$ with $(i, l), (k, j)$
        \STATE Update cycle indices $V[U[k]], V[U[l]] \leftarrow V[U[i]]$
        \STATE $C \leftarrow C - 1$
        \IF{$a_j > a_l$}
            \STATE Update patching arc $(i, j) \leftarrow (k, j)$
        \ELSE
            \STATE Update patching arc $(i, j) \leftarrow (i, l)$
        \ENDIF
    \ELSIF{$a_l > a_j$}
        \STATE Update patching arc $(i, j) \leftarrow (k, l)$
    \ENDIF 
\ENDFOR 
\IF{$C = 1$}
    \STATE \textbf{return} $\mathcal{M}$
\ENDIF
\STATE \textbf{return} \textsc{Nearest Neighbor}
\end{algorithmic}
\end{algorithm}

The algorithm is described in Algorithm~\ref{alg:patching}. It effectively performs the patching procedure outlined in the proof of \autoref{thm:fair_instance}. Starting from an efficient periodic assignment, it processes all transition arcs in chronological order, patching overlapping arcs that belong to disjoint cycles. We store the cycle indices in a dedicated two-array data structure, to ensure that the cycle indices can be updated in constant time after each patching operation. Once all arcs have been processed, the assignment is guaranteed to be free from overlapping transition arcs belonging to disjoint cycles. By \autoref{thm:fair_instance}, this returns a fair and efficient solution on all instances whose idle interval graph is connected. If the assignment still consists of disjoint cycles, we conclude that the idle interval graph must be disconnected. In this case, at least $L + 1$ workers are required in a fair solution, and \textsc{Nearest Neighbor} is called to find such a solution.

To ensure that overlapping transition arcs of disjoint cycles can always be patched, \textsc{Patching} makes use of a so-called \textit{patching arc}. This patching arc is the transition arc with the maximum end time among all previously processed arcs, including arcs obtained through patching. Since arcs are processed in increasing order of start time, the next processed transition arc will always overlap with the patching arc, provided that it belongs to the same idle interval. This way, any transition arc belonging to a different cycle can be successfully patched. It follows that \textsc{Patching} correctly eliminates all overlapping transition arcs belonging to disjoint cycles.

\begin{theorem}
The \textsc{Patching} algorithm returns an optimal fair periodic assignment in $\mathcal{O}(n \log n)$ time.
\end{theorem}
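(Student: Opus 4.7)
I would split the proof into a runtime bound and a correctness argument. The runtime side is the easier part: after running \textsc{Shift-Sort-and-Match} in $\mathcal{O}(n \log n)$, a single traversal of $\mathcal{M}$ labels every task with its original cycle index in $\mathcal{O}(n)$, and sorting the transition arcs by start time takes $\mathcal{O}(n \log n)$. The main loop performs $\mathcal{O}(n)$ iterations of $\mathcal{O}(1)$ work, because the two-array encoding makes the cycle lookup $V[U[\cdot]]$ and the merging assignment $V[U[k]] \leftarrow V[U[i]]$ both $\mathcal{O}(1)$: any task originally in cycle $U[k]$ automatically inherits the new label through this single overwrite. A fallback call to \textsc{Nearest Neighbor} adds at most $\mathcal{O}(n \log n)$.

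For correctness, the plan is to establish a loop invariant and then split into the connected and disconnected cases of the idle interval graph. The key invariant is that, after processing $R[m]$, the patching arc $(i,j)$ is the arc with the largest end time $a_j$ among all processed arcs forming the current maximal chain of pairwise-reachable overlapping arcs inside a single idle interval, and every arc in that chain belongs to the same cycle of $\mathcal{M}$. I would prove this by induction on $m$, handling the three branches of the loop: the test $b_k > a_j$ correctly detects the break of the chain and resets the patching arc to $(k,l)$; when $b_k \leq a_j$ and the two arcs lie on different cycles, the swap $(i,j),(k,l) \mapsto (i,l),(k,j)$ merges them while preserving the total transition cost (since $c_{ij}+c_{kl} = c_{il}+c_{kj}$ whenever the two intervals overlap within a single idle interval, so the $L$-worker count is maintained), and the update rules select whichever of the two replacement arcs has the larger end time; and when the two arcs are already co-cyclic, the patching arc is extended only if $a_l > a_j$. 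Crucially, because the patching arc always carries the maximum end time of its chain, any later arc that overlaps with any arc of the chain must in particular overlap with the patching arc, so no opportunity to merge disjoint cycles is ever missed.

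From this invariant, the conclusion follows in two cases. At termination, no two overlapping transition arcs belong to disjoint cycles. If $C=1$, the returned $\mathcal{M}$ is a single Hamiltonian cycle with the same total transition cost as the initial efficient assignment, so it uses exactly $L$ workers and is optimal. If $C>1$, the contrapositive of \autoref{lem:overlap} forces the idle interval graph to be disconnected, and then \autoref{thm:fair_instance} guarantees that no fair assignment on $L$ workers exists, so the optimum is at least $L+1$; the fallback call to \textsc{Nearest Neighbor} returns a fair assignment using at most $L+1$ workers, matching this bound. The main obstacle I anticipate is faithfully verifying the invariant across all branches of the loop, in particular arguing that the patching-arc update after a successful patch leaves the chain structure intact and that no overlapping pair of arcs on disjoint cycles can slip past between the earlier and later parts of a chain.
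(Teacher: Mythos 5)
Your proposal is correct and follows essentially the same route as the paper: the identical $\mathcal{O}(n\log n)$ accounting, the observation that the patching arc always carries the maximum end time so that any subsequent arc in the same idle interval must overlap it (your ``chain'' is exactly the set of transition arcs covering one idle interval in an efficient assignment), and the final case split resolved via Lemma~\ref{lem:overlap} and Theorem~\ref{thm:fair_instance}. The only cosmetic difference is that you organize the correctness argument as an explicit loop invariant and condition on the terminal value of $C$ rather than on whether the instance admits a fair $L$-worker solution, which is logically equivalent.
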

\begin{proof}
We start with a complexity analysis. Computing an efficient assignment with \textsc{Shift-Sort-and-Match} takes $\mathcal{O}(n \log n)$. One can compute all cycles in $\mathcal{O}(n)$ by iterating once over all transition arcs. Constructing the cycle index arrays $U$ and $V$ also takes linear time. Sorting the transition arcs $R$ by increasing start time requires $\mathcal{O}(n \log n)$. All $\mathcal{ON}(n)$ operations in the for-loop take $\mathcal{O}(1)$: patching itself takes constant time, and the arrays $U$ and $V$ allow us to update the cycle index of each task in constant time as well. Finally, calling \textsc{Nearest Neighbor} takes $\mathcal{O}(n \log n)$. The overall time complexity becomes $\mathcal{O}(n \log n)$.

It is clear that \textsc{Patching} returns a fair periodic assignment. To prove optimality, we distinguish between two cases. If the instance admits a fair solution with $L$ workers, by the proof of \autoref{thm:fair_instance} it suffices to show that \textsc{Patching} successfully patches all overlapping transition arcs belonging to different cycles. If the instance does not admit such a solution, \textsc{Nearest Neighbor} returns an optimal fair solution requiring $L + 1$ workers.

We now show that the algorithm correctly eliminates  all overlapping transition arcs belonging to different cycles. In particular, we show that patching arc $(i, j)$ always overlaps with, and hence can be patched with, the next processed transition arc $(k, l)$, whenever $(k, l)$ belongs to the same idle interval as $(i, j)$. 

Without loss of generality, assume that the initial patching arc $R[1]$ marks the start of an idle interval. Assume that the current patching arc is $(i, j)$, and we are processing arc $(k, l)$. In case the \textsc{IF}-statement on Line 9 evaluates to true, it must hold that $(k, l)$ belongs to a different idle interval than $(i, j)$. To the contrary, suppose that they belong to the same idle interval but $b_k > a_j$. Since we always update the patching arc to the processed transition arc with the latest end time, this would imply that no transition arcs are active in the interval $(a_j, b_k)$, contradicting the definition of idle interval and efficiency of $\mathcal{M}$. Since the arcs belong to different idle intervals, we do not perform patching but simply update the patching arc. 

Now, assume that the two arcs belong to the same idle interval but to disjoint cycles, i.e., Line 11 evaluates to true. We can patch the two arcs whenever they overlap, i.e., whenever the interval $[b_i, a_j] \cap [b_k, a_l] = [\max \{b_i, b_k\}, \min \{ a_j, a_l\}]$ is nonempty. Since we process transition arcs in increasing order of their start time, it holds that $b_i \leq b_k$. From Line 9, it follows that $a_j \geq b_k$. Hence, the two arcs overlap, and we perform a patching operation. Moreover, we update the patching arc to the processed transition arc with the latest end time. 

Finally, in case the two arcs belong to the same idle interval but the same cycle, we do not perform a patching operation. Instead, on Line 21 we update the patching arc to preserve the required property that it has the maximum end time among all processed transition arcs. 
\end{proof}

Interestingly, we are not aware of a more efficient algorithm for testing the \textit{existence} of a fair and efficient periodic assignment than \textsc{Patching}, which directly computes the optimal fair assignment. The obvious alternative way of testing the conditions of \autoref{thm:fair_instance} is to construct the idle interval graph and perform a depth-first search to determine its connectivity. While the latter step takes linear time as $\mathcal{O}(|\mathcal{S}| + |\mathcal{B}|) = \mathcal{O}(n)$, computing the idle intervals themselves requires a sorting of the tasks, bringing the time complexity to $\mathcal{O}(n \log n)$, i.e., the same as \textsc{Patching}.

\section{Fair versus Balanced Assignments}

Thus far, we considered assignments that define periodic schedules: if there are $q$ workers, a worker performing some task in period $p$ also performs this task in periods $p+zq$ for any $z\in \mathbb{Z}_{\geq 0}$. In contrast, \cite{gachet_et_al:OASIcs.ATMOS.2024.5} and \cite{gachet2025balancedassignmentsperiodictasks} consider general \textit{aperiodic} assignments and develop conditions for when such an assignment is \textit{balanced}. Informally, this means that an assignment is fair in the long term average. In this section, we show that there are no benefits for allowing aperiodic assignments, or periodic assignments that repeat with a longer period than the number of workers: if there is a balanced, potentially aperiodic assignment, there also exists a fair periodic assignment with the same number of workers. 

Following \cite{gachet_et_al:OASIcs.ATMOS.2024.5}, an assignment for $q$ workers is a function $f: \tasks \times \mathbb{Z}_{\geq 0} \rightarrow [q]$, where $f(i,r)=m$ means that the $r'$th occurrence of task $i$ is performed by worker $m$. An assignment is feasible if no worker is assigned to two overlapping tasks. More formally, this requires that $f(i, r) \neq f(i', r')$ whenever $(a_i + r, b_i) \cap (a_{i'} + r', b_{i'} + r') \neq \emptyset$ for all $i \neq i'$ and $r, r'$. An assignment is balanced if for all tasks $i\in\tasks$ and all workers $m$

$$\lim_{p\rightarrow \infty} \frac{1}{p}\left| \{ r\in [p] : f(i,r)=m\}\right| = \frac{1}{q}.$$
In other words, in the long term average, all workers perform all tasks with the same proportion.

We let $\tasks^r$ denote the roll-out of the task set, i.e., the set containing the intervals $(a_i+rT,b_i+rT)$ for all $i\in \tasks$ and $r\in \mathbb{Z}_{\geq 0}$. Let $\tasks^r(t)$ denote the number of active tasks at time $t\geq 0$.
The definition of idle intervals naturally extends to the roll-out:
\begin{definition} \label{def:idle_interval2}
A rolled-out idle interval $k$ is a maximal closed interval $[s_k, e_k]$ satisfying $L - \tasks^r(t) > 0$ for all $t \in [s_k, e_k]$.
\end{definition}
Since tasks repeat periodically, it holds that $\tasks^r(t+rT)=\tasks(t)$ for all $r\in \mathbb{Z}_{\geq 0}$ and all $t$. Hence, every regular idle interval (as defined in Section~\ref{sec:fair}) is associated with an infinite number of corresponding rolled-out idle intervals. Analogous to the periodic case, we can define a graph that represents the connections between the idle intervals:
\begin{definition} \label{def:idle_graph_rolled}
Let $\mathcal{S}^r$ denote the set of idle intervals. The rolled-out idle interval graph is the directed multigraph $\mathcal{H}^r = (\mathcal{S}^r, \mathcal{B}^r)$ that contains an arc $(k, l)$ for every task $i \in \tasks^r$ satisfying $a_i \in [s_k, e_k]$ and $b_i \in [s_l, e_l]$.
\end{definition}
Since there is an infinite number of rolled-out idle intervals, the rolled-out idle interval graph is an infinite graph. Moreover, while the periodic idle interval graph is cyclic, the rolled-out idle interval graph is acyclic, as all arcs go forward in time. \autoref{fig:rolled_out} shows the first three periods of the rolled-out idle interval graph corresponding to a roll-out of instance \textsc{B}.

\newcommand{\arc}[4]{
    \pgfmathsetmacro{\innerR}{#3-0.1}
    \pgfmathsetmacro{\outerR}{#3+0.1}
    \pgfmathsetmacro{\middle}{0.1 * #1 + 0.9 * #2}
    \draw[black] ({#3*cos(#1)}, {#3*sin(#1)}) arc (#1:#2:#3);
    \draw[black] ({\innerR *cos(#1)}, {\innerR * sin(#1)}) -- ({\outerR*cos(#1)}, {\outerR*sin(#1)});
    \draw[black] ({\innerR *cos(#2)}, {\innerR*sin(#2)}) -- ({\outerR*cos(#2)}, {\outerR*sin(#2)});
    \node at ({(\outerR + 0.11) *cos(\middle)}, {(\outerR + 0.11) *sin(\middle)}) {#4};
}

\tikzset{->-/.style={decoration={
			markings,
			mark=at position #1 with {\arrow{>}}},postaction={decorate}}}

\newcommand{\slack}[3]{
    \draw[thick] (#1, #3) -- (#2, #3);
    \draw[fill=black] (#1, #3) circle (2pt);
    \draw[fill=black] (#2, #3) circle (2pt);
}

\newcommand{\zeroslack}[2]{
    \draw[thick] (#1, 0) -- (#2, 0);
    \draw[fill=white] (#1, 0) circle (2pt);
    \draw[fill=white] (#2, 0) circle (2pt);
}

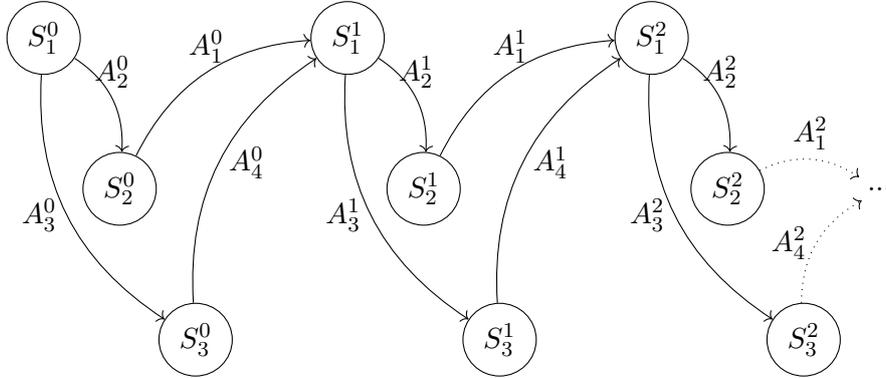
\begin{figure}[htbp!]
\centering
\begin{tikzpicture}
    \node[draw=black, shape=circle] at (0, 4) (1) {$S^0_1$};
    \node[draw=black, shape=circle] at (1, 2) (2) {$S^0_2$};
    \node[draw=black, shape=circle] at (2, 0) (3) {$S^0_3$};

    \node[draw=black, shape=circle] at (4, 4) (4) {$S^1_1$};
    \node[draw=black, shape=circle] at (5, 2) (5) {$S^1_2$};
    \node[draw=black, shape=circle] at (6, 0) (6) {$S^1_3$};

    \node[draw=black, shape=circle] at (8, 4) (7) {$S^2_1$};
    \node[draw=black, shape=circle] at (9, 2) (8) {$S^2_2$};
    \node[draw=black, shape=circle] at (10, 0) (9) {$S^2_3$};

    \node at (11, 2) (10) {...};
    
    \draw[->, bend left] (1) to node[midway, above] {$A^0_2$} (2);
    \draw[->, bend right] (1) to node[midway, left] {$A^0_3$} (3);

    \draw[->, bend left] (4) to node[midway, above] {$A^1_2$} (5);
    \draw[->, bend right] (4) to node[midway, left] {$A^1_3$} (6);

    \draw[->, bend left] (7) to node[midway, above] {$A^2_2$} (8);
    \draw[->, bend right] (7) to node[midway, left] {$A^2_3$} (9);
    
    \draw[->, bend left] (3) to node[midway, right] {$A^0_4$} (4);
    \draw[->, bend left] (6) to node[midway, right] {$A^1_4$} (7);
     
    \draw[->, bend left] (2) to node[midway, above] {$A^0_1$} (4);
    \draw[->, bend left] (5) to node[midway, above] {$A^1_1$} (7);

    \draw[->, bend left, dotted] (8) to node[midway, above] {$A^2_1$} (10);
    \draw[->, bend left, dotted] (9) to node[midway, left] {$A^2_4$} (10);
    
\end{tikzpicture}
\caption{Rolled-out idle interval graph of instance \textsc{B}, showing the first three periods.}
\label{fig:rolled_out}
\end{figure}

There is a many-to-one mapping from nodes and arcs in the rolled-out graph to nodes and arcs in the periodic graph. This results in the following observation:

\begin{observation}
\label{obs:connection}
   $\mathcal{H}^r$ is weakly connected if and only $\mathcal{H}$ is connected.
\end{observation}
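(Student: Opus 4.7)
The proof will rest on the natural projection $\pi\colon \mathcal{H}^r \to \mathcal{H}$ that sends each rolled-out idle interval to its periodic counterpart (modulo $T$) and each task occurrence $(i,r)$ to the underlying task $i$. By construction, $\pi$ is a surjective directed graph homomorphism, which gives the forward direction immediately: any undirected path in $\mathcal{H}^r$ projects under $\pi$ to an undirected walk in $\mathcal{H}$ between the corresponding projected endpoints, so weak connectivity pushes down from $\mathcal{H}^r$ to $\mathcal{H}$.

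For the converse, suppose $\mathcal{H}$ is weakly connected. I would first strengthen this to strong connectivity: at every idle interval the number of task starts equals the number of task ends (the load $\tasks(\cdot)$ returns to $L$ at both bounding events of the interval), so $\mathcal{H}$ is balanced, and any balanced weakly-connected directed multigraph is Eulerian and hence strongly connected. Given any $\alpha,\beta \in \mathcal{S}^r$, the plan is then to connect them in two steps. Step one: lift a directed walk in $\mathcal{H}$ from $\pi(\alpha)$ to $\pi(\beta)$, starting the lift at $\alpha$. The lift is well-defined arc by arc: for any arc of $\mathcal{H}$ via task $i$ and any current rolled-out node $\nu \in \pi^{-1}(u)$, there is a \emph{unique} occurrence of $i$ whose start falls in $\nu$, because each rolled-out idle interval has length strictly less than $T$. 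This yields a directed walk in $\mathcal{H}^r$ from $\alpha$ to some $\beta' \in \pi^{-1}(\pi(\beta))$.

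Step two is to connect $\beta'$ to $\beta$ within the fibre $\pi^{-1}(\pi(\beta))$. These are two rolled-out copies of the same periodic idle interval, differing by some integer multiple of the shift by $T$. Here I would invoke the following key fact: the cycle traced in $\mathcal{H}$ by the tasks performed by a single worker in an efficient PAP assignment (which exists by \autoref{thm:pap}) lifts to a cycle in $\mathcal{H}^r$ whose period shift is exactly $1$, reflecting that the worker's schedule repeats every $T$. Taking the undirected closure allows shifts of $\pm 1$, and these generate all of $\mathbb{Z}$, so every element of $\pi^{-1}(\pi(\beta))$ is reachable from $\beta'$ by an undirected walk.

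The main obstacle I foresee is formalizing the fibre-connection step. This requires (i) verifying that a single-worker cycle in an efficient assignment really lifts to a unit period shift---intuitively clear because one worker's trajectory closes up exactly once per period---and (ii) handling the mild boundary effects near $t=0$, where partial rolled-out idle intervals behave asymmetrically. The latter can be sidestepped by first lifting any directed walk far enough into the future before invoking the shift symmetry, so these boundary copies never obstruct the construction.
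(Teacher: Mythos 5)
Your forward direction (projection) is sound, and your recognition that the converse is \emph{not} immediate---a connected quotient does not force a connected cover---is a genuine insight: the paper justifies the whole observation with the one-line remark about the many-to-one mapping, which only handles the easy direction. Your balancedness argument (in-degree equals out-degree at every idle interval, hence strong connectivity of $\mathcal{H}$) and the arc-by-arc lifting of directed walks are both correct.

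The gap is in the fibre step. The ``key fact'' you invoke is false: a worker operating a cycle $\mathcal{C}$ of an efficient PAP solution does \emph{not} repeat its schedule every $T$; it repeats every $w(\mathcal{C})\,T$, since the total length of the cycle (tasks plus transitions) is $w(\mathcal{C})\,T$. Hence the closed walk that $\mathcal{C}$ traces in $\mathcal{H}$ lifts to $\mathcal{H}^r$ with period shift $w(\mathcal{C})$, not $1$. For instance, on instance \textsc{B} the fair efficient solution $(I_1 \rightarrow I_3 \rightarrow I_4 \rightarrow I_2 \rightarrow I_1)$ is a single cycle with $w(\mathcal{C})=2$, and each of its two workers repeats only every $2T$; the paper makes the same point about \autoref{fig:solutions_fair}. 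If the efficient assignment decomposes into cycles $\mathcal{C}_1,\dots,\mathcal{C}_p$, the undirected closure of their lifts only yields shifts in $\gcd\left(w(\mathcal{C}_1),\dots,w(\mathcal{C}_p)\right)\cdot\mathbb{Z}$, which need not be all of $\mathbb{Z}$, so connectivity of the fibre $\pi^{-1}(\pi(\beta))$ is not established. What you actually need is that the winding numbers of closed undirected walks in $\mathcal{H}$ (measured against a cut placed at a point of maximal load) generate $\mathbb{Z}$, and this requires an argument beyond the mere existence of an efficient assignment---for example, showing directly that two rolled-out idle intervals that are consecutive in time are always joined by an undirected path, or that the $L$ unit-shifted copies of a single worker's trajectory in the fair solution provided by \autoref{thm:fair_instance} pairwise meet in some rolled-out idle interval. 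As it stands this step is a genuine hole---one the paper itself does not close either.
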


We can now prove a counterpart of \autoref{thm:fair_instance} for aperiodic instances.

\begin{theorem} \label{thm:balanced_instance}
An instance admits a balanced assignment with $L$ workers if and only if $\mathcal{H}^r$ is weakly connected.
\end{theorem}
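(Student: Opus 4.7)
The plan is to prove both directions, using Observation~\ref{obs:connection} (which equates weak connectivity of $\mathcal{H}^r$ and $\mathcal{H}$) together with Theorem~\ref{thm:fair_instance}.

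For sufficiency, suppose $\mathcal{H}^r$ is weakly connected. By the Observation, $\mathcal{H}$ is connected, so Theorem~\ref{thm:fair_instance} supplies a fair periodic assignment with $L$ workers. Writing this fair assignment in the aperiodic format $f \colon \tasks \times \mathbb{Z}_{\geq 0} \to [L]$, all $L$ workers rotate through a single Hamiltonian cycle of tasks with a one-period stagger, so each worker performs each task exactly once every $L$ periods and the limiting frequency is $1/L$. Thus the fair assignment is balanced.

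For necessity, I would argue by contraposition: assume $\mathcal{H}^r$ is disconnected, so by the Observation $\mathcal{H}$ also has at least two weakly connected components. These partition the task set as $\tasks = \tasks_1 \sqcup \tasks_2$ (both nonempty) according to which component the arc of each task lies in. The central claim is that, in any feasible $L$-worker assignment, every worker performs tasks exclusively from one of the two parts. Given this, some worker never performs any task in (say) $\tasks_2$, producing a limiting frequency of $0 \neq 1/L$ and contradicting balancedness.

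To establish the claim, I would show that consecutive tasks of a single worker correspond to arcs in $\mathcal{H}^r$ sharing a common endpoint. Suppose worker $m$ finishes occurrence $(i,r)$ at $b := b_i+rT$ and next starts $(j,r')$ at $a' := a_j + r'T \geq b$. Throughout $(b,a')$ worker $m$ is idle; since only $L$ workers exist in total, this forces $\tasks^r(t) \leq L-1$ on the entire interval, so $(b,a')$ is contained in a single rolled-out idle interval. Hence the arcs representing $(i,r)$ and $(j,r')$ meet at a common node in $\mathcal{H}^r$. Iterating, each worker's sequence of occurrences forms an infinite walk in $\mathcal{H}^r$ confined to one weakly connected component, which projects to a walk in $\mathcal{H}$ lying in one periodic component. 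The main obstacle I foresee is the careful boundary treatment (tasks are open intervals, making $\tasks^r$ discontinuous at endpoints) and ensuring that the maximal rolled-out idle interval containing $(b,a')$ indeed witnesses the adjacency in $\mathcal{H}^r$; these details should follow cleanly from the load-balance identity between active tasks and busy workers under the tight $L$-worker bound.
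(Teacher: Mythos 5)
Your proof is correct and follows essentially the same route as the paper: sufficiency via Observation~\ref{obs:connection} and Theorem~\ref{thm:fair_instance}, and necessity via the fact that a transition between tasks in different components of the idle interval graph is impossible when only $L$ workers are available. Your necessity direction is in fact a more careful, contrapositive rendering of the paper's rather terse argument --- the key step, that an idle worker forces $\tasks^r(t)\leq L-1$ so that each worker's consecutive occurrences trace a walk inside a single weakly connected component of $\mathcal{H}^r$, makes precise what the paper only sketches.
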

\begin{proof}
    If $\mathcal{H}^r$ is weakly connected, $\mathcal{H}$ is connected. It follows from Theorem~\ref{thm:fair_instance} that there exists a fair periodic assignment with $L$ workers, which also defines a balanced assignment. 
    
    To prove the other direction, assume that the rolled-out idle interval graph is not weakly connected. This implies that there are tasks that are not connected through idle intervals. In a balanced schedule, however, workers must perform all (periodic) tasks, necessitating the use of long transition arcs that cross idle intervals. At such instants, there are $L$ active task arcs and at least one active transition arc, so at least $L+1$ workers are required in a balanced solution. 
\end{proof}
Recall that there always exists a fair periodic assignment with $L+1$ workers. It directly follows from \autoref{thm:balanced_instance} that imposing balancedness instead of the stricter fairness criterion does not provide any benefits in terms of efficiency:
\begin{corollary} 
An instance admits a balanced assignment with $q$ workers if and only if the instance admits a fair periodic assignment with $q$ workers.
\end{corollary}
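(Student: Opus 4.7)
The plan is to split the biconditional into its two directions, with the backward direction (fair $\Rightarrow$ balanced) being immediate and the forward direction (balanced $\Rightarrow$ fair) reducing to a short case analysis on $q$ via the prior characterizations.

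For the backward direction, I would observe that a fair periodic assignment with $q$ workers cyclically rotates $q$ workers through a single Hamiltonian cycle of total length $qT$. Each worker thus performs every task exactly once every $q$ occurrences, so the long-run proportion is $1/q$ for every worker-task pair, matching the definition of balancedness with $q$ workers.

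For the forward direction, suppose a balanced assignment with $q$ workers exists. Feasibility forces $q \ge L$, since $L$ tasks can be simultaneously active in the roll-out and each must be handled by a distinct worker. If $q = L$, then \autoref{thm:balanced_instance} gives that $\mathcal{H}^r$ is weakly connected, \autoref{obs:connection} transfers this to $\mathcal{H}$ being connected, and \autoref{thm:fair_instance} then produces a fair periodic assignment with exactly $L$ workers. If $q \ge L+1$, the Nearest Neighbor bound of \autoref{col:pof} guarantees a fair periodic assignment with at most $L+1 \le q$ workers; combined with the backward direction this matches the minimum worker counts of the two problems, so the equivalence holds at the (only interesting) threshold.

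The only genuine content lies in the $q = L$ case, where the chain \autoref{thm:balanced_instance} $\to$ \autoref{obs:connection} $\to$ \autoref{thm:fair_instance} does all the work. There is no single hard step; the corollary is essentially a bookkeeping consequence of the paper's prior graph-connectivity characterizations, formalizing the message that relaxing fairness to long-run balancedness — even while allowing aperiodic schedules with arbitrarily long periods — yields no efficiency gain.
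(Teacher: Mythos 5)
Your proposal is correct and follows essentially the same route the paper intends: the backward direction is the observation that a fair periodic assignment is automatically balanced, and the forward direction combines the threshold characterization of \autoref{thm:balanced_instance} (via \autoref{obs:connection} and \autoref{thm:fair_instance}) for $q=L$ with the Nearest Neighbor $L+1$ bound for $q\geq L+1$. The paper leaves this as a one-line consequence, so your write-up is simply a more explicit version of the same argument.
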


\bibliography{references}

\begin{thebibliography}{10}

\bibitem{bortoletto_et_al}
Enrico Bortoletto, Rolf~N van Lieshout, Berenike Masing, and Niels Lindner.
\newblock {Periodic Event Scheduling with Flexible Infrastructure Assignment}.
\newblock In {\em 24th Symposium on Algorithmic Approaches for Transportation
  Modelling, Optimization, and Systems (ATMOS 2024)}, volume 123 of {\em Open
  Access Series in Informatics (OASIcs)}, pages 4:1--4:18, Dagstuhl, Germany,
  2024. Schloss Dagstuhl -- Leibniz-Zentrum f{\"u}r Informatik.
\newblock \href {https://doi.org/10.4230/OASIcs.ATMOS.2024.4}
  {\path{doi:10.4230/OASIcs.ATMOS.2024.4}}.

\bibitem{breugem2022equality}
Thomas Breugem, Twan Dollevoet, and Dennis Huisman.
\newblock Is equality always desirable? {A}nalyzing the trade-off between
  fairness and attractiveness in crew rostering.
\newblock {\em Management Science}, 68(4):2619--2641, 2022.
\newblock \href {https://doi.org/10.1287/mnsc.2021.4005}
  {\path{doi:10.1287/mnsc.2021.4005}}.

\bibitem{burkard1998well}
Rainer~E Burkard, Vladimir~G Deineko, Rene Van~Dal, Jack~AA van~der Veen, and
  Gerhard~J Woeginger.
\newblock Well-solvable special cases of the traveling salesman problem: {A}
  survey.
\newblock {\em SIAM Review}, 40(3):496--546, 1998.

\bibitem{gachet_et_al:OASIcs.ATMOS.2024.5}
H\'{e}lo\"{i}se Gachet and Fr\'{e}d\'{e}ric Meunier.
\newblock {Balanced Assignments of Periodic Tasks}.
\newblock In Paul~C. Bouman and Spyros~C. Kontogiannis, editors, {\em 24th
  Symposium on Algorithmic Approaches for Transportation Modelling,
  Optimization, and Systems (ATMOS 2024)}, volume 123 of {\em Open Access
  Series in Informatics (OASIcs)}, pages 5:1--5:12, Dagstuhl, Germany, 2024.
  Schloss Dagstuhl -- Leibniz-Zentrum f{\"u}r Informatik.
\newblock \href {https://doi.org/10.4230/OASIcs.ATMOS.2024.5}
  {\path{doi:10.4230/OASIcs.ATMOS.2024.5}}.

\bibitem{gachet2025balancedassignmentsperiodictasks}
Héloïse Gachet and Frédéric Meunier.
\newblock Balanced assignments of periodic tasks.
\newblock 2025.
\newblock \href {https://arxiv.org/abs/2407.05485} {\path{arXiv:2407.05485}}.

\bibitem{gupta}
Gupta, Lee, and Leung.
\newblock An optimal solution for the channel-assignment problem.
\newblock {\em IEEE Transactions on Computers}, C-28(11):807--810, 1979.
\newblock \href {https://doi.org/10.1109/TC.1979.1675260}
  {\path{doi:10.1109/TC.1979.1675260}}.

\bibitem{korst1994periodic}
Jan H.~M. Korst, Emile H.~L. Aarts, Jan~Karel Lenstra, and Jaap Wessels.
\newblock Periodic assignment and graph colouring.
\newblock {\em Discrete Applied Mathematics}, 51(3):291--305, 1994.
\newblock \href {https://doi.org/10.1016/0166-218X(92)00036-L}
  {\path{doi:10.1016/0166-218X(92)00036-L}}.

\bibitem{LEVNER2010352}
Eugene Levner, Vladimir Kats, David {Alcaide López de Pablo}, and T.C.E.
  Cheng.
\newblock Complexity of cyclic scheduling problems: A state-of-the-art survey.
\newblock {\em Computers \& Industrial Engineering}, 59(2):352--361, 2010.
\newblock \href {https://doi.org/10.1016/j.cie.2010.03.013}
  {\path{doi:10.1016/j.cie.2010.03.013}}.

\bibitem{orlin1982minimizing}
James~B Orlin.
\newblock Minimizing the number of vehicles to meet a fixed periodic schedule:
  An application of periodic posets.
\newblock {\em Operations Research}, 30(4):760--776, 1982.

\bibitem{vanlieshout2021integrated}
Rolf~N van Lieshout.
\newblock Integrated periodic timetabling and vehicle circulation scheduling.
\newblock {\em Transportation Science}, 55(3):768--790, 2021.
\newblock \href {https://doi.org/10.1287/trsc.2020.1024}
  {\path{doi:10.1287/trsc.2020.1024}}.

\bibitem{xie2015cyclic}
Lin Xie and Leena Suhl.
\newblock Cyclic and non-cyclic crew rostering problems in public bus transit.
\newblock {\em {OR Spectrum}}, 37:99--136, 2015.
\newblock \href {https://doi.org/10.1007/s00291-014-0364-9}
  {\path{doi:10.1007/s00291-014-0364-9}}.

\end{thebibliography}

\end{document}